\theoremstyle{plain}
\newlength{\alginputwidth}
\newlength{\algboxwidth}
\newcommand{\alginput}[1]{\makebox[1.5cm][l]{ {\sc Input:}} \parbox[t]{\alginputwidth}{{\it #1}}}
\newcommand{\algoutput}[1]{\makebox[1.5cm][l]{ {\sc Output:}} \parbox[t]{\alginputwidth}{{\it #1}}}
\newcommand{\algtitle}[1]{\underline{Algorithm \ {\bf #1}} \vspace*{1mm}\\}
\newsavebox{\algbox}
\newsavebox{\captionbox}
\newenvironment{algorthm}[2]%
    {
	\setlength{\algboxwidth}{\columnwidth}
	\addtolength{\algboxwidth}{-\columnsep}
	\addtolength{\algboxwidth}{-1mm}
	\setlength{\alginputwidth}{\algboxwidth}
	\addtolength{\alginputwidth}{-1.7cm}
	\begin{figure}[tb]
	    \vspace*{2mm}
	    \centering
	    \begin{lrbox}{\captionbox}
		\begin{minipage}[b]{\algboxwidth}
		    \centering
		    \caption{#1}
		    \label{#2}
		\end{minipage}
	    \end{lrbox}
	    \begin{lrbox}{\algbox}
		\begin{minipage}[b]{\algboxwidth}
		    \footnotesize
		    \vspace*{2mm}
    } 
    {
		    \vspace*{0.2mm}
	       \end{minipage}
	    \end{lrbox}
	    \fbox{\usebox{\algbox}\hspace*{1mm}}
	    \usebox{\captionbox}
	    \vspace*{-4mm}
	\end{figure}
    }
\newsavebox{\algcodebox}
\newenvironment{codeblock}%
    {
	\begin{enumerate}
	    \setlength{\itemsep}{2pt}
	    \setlength{\parsep}{0pt}
	    \setlength{\topsep}{0pt}
	    \setlength{\parskip}{0pt}
	    \setlength{\partopsep}{0pt}
    } 
    {\end{enumerate}}
\newcommand{\step}{\item}
\begin{document}

\title{On Geometric Spanners of Euclidean and Unit Disk Graphs}
\author[]{I. Kanj}{Iyad A. Kanj}
\address[]{DePaul University, Chicago, IL 60604, USA.}
\thanks{The work of the first author was supported in part by a DePaul
University Competitive Research Grant.}

\author[]{L. Perkovi\'{c}}{Ljubomir Perkovi\'{c}}
\email{{ikanj,lperkovic}@cs.depaul.edu}

\keywords{Geometric spanner, euclidean graph, unit disk graph, wireless ad-hoc
networks}
\subjclass{ C.1.4, F.2.2, G.2.2}

\begin{abstract}
  \noindent We consider the problem of constructing bounded-degree planar
geometric spanners of Euclidean and unit-disk graphs. It is well known that
the Delaunay subgraph is a planar geometric spanner with stretch
factor $C_{del}\approx 2.42$; however, its degree may not be
bounded. Our first result is a very simple linear time
algorithm for constructing a subgraph of the Delaunay graph with
stretch factor $\rho =1+2\pi(k\cos{\frac{\pi}{k}})^{-1}$ and degree
bounded by $k$, for any integer parameter $k\geq 14$. This result
immediately implies an algorithm for constructing a planar geometric
spanner of a Euclidean graph with stretch factor $\rho \cdot
C_{del}$ and degree bounded by $k$, for any integer parameter $k\geq
14$. Moreover, the resulting spanner contains a Euclidean Minimum Spanning Tree
(EMST) as a subgraph. Our second contribution lies in developing the structural
results necessary to transfer our analysis and algorithm from
Euclidean graphs to unit disk graphs, the usual model for wireless ad-hoc
networks. We obtain a very simple distributed, {\em strictly-localized}
algorithm that, given a unit disk graph embedded in the plane, constructs a 
geometric spanner with the above stretch factor and degree bound, and also
containing an EMST as a subgraph.  The obtained results
dramatically improve the previous results in all aspects, as shown
in the paper. 
\end{abstract}

\maketitle

\stacsheading{2008}{409-420}{Bordeaux}
\firstpageno{409}

\section*{Introduction} \label{intro}
Given a set of points $P$ in the plane, the Euclidean graph $E$ on
$P$ is defined to be the complete graph whose vertex-set is $P$.
Each edge $AB$ connecting points $A$ and $B$ is assumed to be
embedded in the plane as the straight line segment $AB$; we define
its cost to be the Euclidean distance $|AB|$. We
define the unit disk graph $U$ to be the subgraph of $E$ consisting
of all edges $AB$ with $|AB| \leq 1$.

Let $G$ be a subgraph of $E$. The cost of a simple path $A = M_0,
M_1, ..., M_r = B$ in $G$ is $\sum_{j=0}^{r-1} |M_jM_{j+1}|.$ Among
all paths between $A$ and $B$ in $G$, a path with the smallest cost
is defined to be a {\em smallest cost path} and we denote its cost
as $c_G(A,B)$. A spanning subgraph $H$ of $G$ is said to be a {\em geometric
spanner} of $G$ if there is a constant $\rho$ such that for every
two points $A,B \in G$ we have: $c_H(A,B) \leq \rho \cdot c_G(A,B)$.
The constant $\rho$ is called the {\em stretch factor} of $H$ (with
respect to the underlying graph $G$).

The problem of constructing geometric spanners of Euclidean graphs has
recently received a lot of attention due to its applications in
computational geometry, wireless computing, and computer graphics
(see, for example, the recent book~\cite{spannerbook} for a survey on
geometric spanners and their applications in networks). Dobkin et
al.~\cite{dobkin} showed that the Delaunay graph is a planar
geometric spanner of the Euclidean graph with stretch factor
$(1+\sqrt{5})\pi/2 \approx 5.08$. This ratio was improved by
Keil et al~\cite{keil} to $C_{del}=2\pi/(3\cos{(\pi/6)}) \approx
2.42$, which currently stands as the best upper bound on the stretch
factor of the Delaunay graph. Many researchers believe, however, that
the lower bound of $\pi/2$ shown in~\cite{chew} is also an upper
bound on the stretch factor of the Delaunay graph. While
Delaunay graphs are good planar geometric spanners of Euclidean
graphs, they may have unbounded degree. 

Other geometric (sparse) spanners were also proposed in the literature
including the Yao graphs~\cite{yao}, the $\Theta$-graphs~\cite{keil},
and many others (see~\cite{spannerbook}). However, most of these proposed
spanners either do not guarantee planarity, or do not guarantee bounded
degree.

Bose et al.~\cite{boseesa,bosealgorithmica} were the first to show
how to extract a subgraph of the Delaunay graph that is a planar
geometric spanner of the Euclidean graph with
stretch factor $\approx 10.02$ and degree bounded by $27$. In the
context of unit disk graphs, Li et
al.~\cite{iitunbounded1,iitunbounded} gave a distributed algorithm
that constructs a planar geometric spanner of a unit disk graph with
stretch factor $C_{del}$; however, the spanner constructed can have
unbounded degree. Wang and Li~\cite{iitbounded1,iitbounded} then
showed how to construct a bounded-degree planar spanner of a unit
disk graph with stretch factor $max\{\pi/2, 1+
\pi\sin{(\alpha/2)}\}\cdot C_{del}$ and degree bounded by $19 +
2\pi/\alpha$, where $0< \alpha < 2\pi/3$ is a parameter. Very
recently, Bose et. al~\cite{bose1} improved the earlier result in
~\cite{boseesa,bosealgorithmica} and showed how to construct a
subgraph of the Delaunay graph that is a geometric spanner of the
Euclidean graph with stretch factor: $max\{\pi/2, 1+
\pi\sin{(\alpha/2)}\}\cdot C_{del}$ if $\alpha < \pi/2$ and $(1 +
2\sqrt{3} + 3\pi/2 + \pi\sin{(\pi/12)})\cdot C_{del}$ when $\pi/2
\leq \alpha \leq 2\pi/3$, and whose degree is bounded by $14 +
2\pi/\alpha$. Bose et al. then applied their construction to obtain
a planar geometric spanner of a unit disk graph with stretch factor
$max\{\pi/2, 1+ \pi\sin{(\alpha/2)}\}\cdot C_{del}$ and degree
bounded by $14+ 2\pi/\alpha$, for any $0 < \alpha \leq \pi/3$. This
was the best bound on the stretch factor and the degree.

We have two new results in this paper. We develop structural results about
Delaunay graphs that allow us to present a very simple linear-time algorithm
that, given a Delaunay graph, constructs a subgraph of the Delaunay graph
with stretch factor $1 + 2\pi(k \cos{(\pi/k)})^{-1}$ (with respect
to the Delaunay graph) and degree at most $k$, for any integer
parameter $k\geq 14$. This result immediately implies an $O(n\lg{n})$
algorithm for constructing a planar geometric spanner of a Euclidean graph
with stretch factor of $(1 + 2\pi(k \cos{(\pi/k)})^{-1})\cdot
C_{del}$ and degree at most $k$, for any integer parameter $k \geq
14$ ($n$ is the number of vertices in the graph). We then translate our
work to unit disk graphs and present our second result: a
very simple and {\em strictly-localized} distributed algorithm that, given a
unit-disk graph embedded in the plane, constructs a planar geometric spanner
of the unit disk graph with stretch factor $(1 + 2\pi(k
\cos{(\pi/k)})^{-1})\cdot C_{del}$ and degree bounded by $k$, for
any integer parameter $k \geq 14$. This efficient distributed algorithm
exchanges no more than $O(n)$ messages in total, and runs in $O(\Delta
\lg{\Delta})$ local time at a node of degree $\Delta$. We show that both
spanners include a Euclidean Minimum Spanning Tree as a subgraph.

Both algorithms significantly improve previous results (described above)
in terms of the stretch factor and the degree bound. To show this, we compare
our results with previous results in more detail. For a degree bound $k=14$,
our result on Euclidean graphs imply a bound of at most $3.54$ on the stretch
factor. As the degree bound $k$ approaches $\infty$, our bound on the
stretch factor approaches $C_{del} \approx 2.42$. The very recent
results of Bose et al.~\cite{bose1} achieve a lowest degree bound of
$17$, and that corresponds to a bound on the stretch factor of at
least $23$. If Bose et al.~\cite{bose1} allow the degree bound to be
arbitrarily large (i.e., to approach $\infty$), their bound on the
stretch factor approaches $(\pi/2) \cdot C_{del} > 3.75$. Our stretch factor
and degree bounds for unit disk graphs are the same as our results
for Euclidean graphs. The smallest
degree bound derived by Bose et al.~\cite{bose1} is 20, and that
corresponds to a stretch factor of at least 6.19. If Bose et
al.~\cite{bose1} allow the degree bound to be arbitrarily large,
then their bound on the stretch factor approaches $(\pi/2) \cdot
C_{del} > 3.75$. On the other hand, the smallest degree bound derived in
Wang et al.~\cite{iitbounded1,iitbounded} is 25, and that
corresponds to a bound of 6.19 on the stretch factor. If Wang et
al.~\cite{iitbounded1,iitbounded} allow the degree bound to be
arbitrarily large, then their bound on the stretch factor approaches
$(\pi/2) \cdot C_{del} > 3.75$. Therefore, even the worst bound of
at most 3.54 on the stretch factor corresponding to our lowest bound
on the degree $k=14$, beats the best bound on the stretch factor of
at least 3.75 corresponding to arbitrarily large degree in both Bose
et al.~\cite{bose1} and Wang et al.~\cite{iitbounded1,iitbounded}!

\section{Definitions and Background}
\label{background}

We start with the following well known observation:
\begin{observation} \label{optimal} A subgraph $H$ of graph $G$ has
stretch factor $\rho$ if and only if for every edge $XY \in G$: the
length of a shortest path in $H$ from $X$ to $Y$ is at most $\rho
\cdot |XY|$.
\end{observation}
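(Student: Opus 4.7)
The plan is to prove the biconditional directly from the definitions, treating each direction separately.

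The forward direction is an immediate consequence of the definition of stretch factor. If $H$ has stretch factor $\rho$, then $c_H(A,B) \leq \rho \cdot c_G(A,B)$ for every pair $A,B$. For any edge $XY \in G$, the single edge $XY$ is itself a path in $G$ of cost $|XY|$, hence $c_G(X,Y) \leq |XY|$, and combining the inequalities yields $c_H(X,Y) \leq \rho \cdot |XY|$, which is exactly the shortest-path-in-$H$ bound.

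For the reverse direction, I would fix arbitrary vertices $A$ and $B$ and let $A = M_0, M_1, \ldots, M_r = B$ be a smallest-cost path in $G$, so that $c_G(A,B) = \sum_{j=0}^{r-1} |M_j M_{j+1}|$. Each consecutive pair $M_j M_{j+1}$ is an edge of $G$, so the hypothesis provides a path in $H$ from $M_j$ to $M_{j+1}$ of cost at most $\rho \cdot |M_j M_{j+1}|$. Concatenating these $r$ paths produces a walk in $H$ from $A$ to $B$ whose total cost is at most $\rho \cdot \sum_{j=0}^{r-1}|M_j M_{j+1}| = \rho \cdot c_G(A,B)$, which bounds $c_H(A,B)$ from above as required.

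The only subtlety, and thus the main (though minor) obstacle, is that the concatenation above may revisit vertices and therefore fail to be a simple path, whereas $c_G$ and $c_H$ are defined via simple paths. I would dispose of this via the standard short-circuiting argument: any walk in $H$ from $A$ to $B$ contains a simple $A$-$B$ subpath of no greater cost, obtained by iteratively excising cycles (edge costs are nonnegative since they are Euclidean distances). Hence $c_H(A,B) \leq \rho \cdot c_G(A,B)$, completing the proof.
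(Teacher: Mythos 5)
Your proof is correct; the paper states this observation as ``well known'' and gives no proof at all, so there is nothing to compare against. Both directions of your argument are sound, and you correctly handle the one real subtlety (the concatenated replacement paths form a walk, not necessarily a simple path, which you fix by excising cycles using nonnegativity of Euclidean edge costs) --- this is exactly the standard argument the paper implicitly relies on.
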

For three non-collinear points $X$, $Y$, $Z$ in the plane we denote
by $\bigcirc{XYZ}$ the circumscribed circle of triangle
$\triangle{XYZ}$. A {\em Delaunay triangulation} of a set of points $P$ in
the plane is a triangulation of $P$ in which the circumscribed circle of every
triangle contains no point of $P$ in its interior. It is well known
that if the points in $P$ are {\em in general position} (i.e., no
four points in $P$ are cocircular) then the Delaunay triangulation
of $P$ is unique~\cite{book}. In this paper---as in most papers in
the literature---we shall assume that the points in $P$ are in
general position; otherwise, the input can be slightly perturbed so
that this condition is satisfied. The {\em Delaunay graph} of $P$ is
defined as the plane graph whose point-set is $P$ and whose edges
are the edges of the Delaunay triangulation of $P$. An alternative definition
that we end up using is:
\begin{definition}
\label{DelaunayDef}
An edge $XY$ is in the Delaunay graph of $P$ if and only if there exists a
circle through points $X$ and $Y$ whose interior contains no point in $P$.
\end{definition}
It is well known
that the Delaunay graph of a set of points $P$ is a spanning
subgraph of the Euclidean graph defined on $P$ (i.e., the complete
graph on point-set $P$) whose stretch factor is bounded by $C_{del}
= 4\sqrt{3} \pi /9 \approx 2.42$~\cite{keil}.

Given integer parameter $k > 6$, the {\em Yao subgraph}~\cite{yao}
of a plane graph $G$ is constructed by performing the
following {\em Yao step} at every point $M$ of $G$: place $k$ equally-separated
rays out of $M$ (arbitrarily defined), thus creating $k$ closed cones of
size $2\pi/k$ each, and choose the shortest edge in $G$ out of $M$ (if
any) in each cone. The Yao subraph consists of edges in $G$ chosen by
{\em either} endpoint. Note that the degree of a point in the Yao subgraph
of $G$ may be unbounded.

Two edges $MX$, $MY$ incident on a point $M$ in a graph $G$ are said
to be {\em consecutive} if one of the angular sectors determined by
$MX$ and $MY$ contains no neighbors of $M$.

\section{Bounded Degree Spanners of Delaunay Graphs}
\label{euclidean} 

Let $P$ be a set of points in the plane and let $E$
be the complete, Euclidean graph defined on point-set $P$. Let $G$ be the
Delaunay graph of $P$. This section is devoted to proving the following
theorem:

\begin{theorem}
\label{maintheorem} For every integer $k \geq 14$, there exists a
subgraph $G'$ of $G$ such that $G'$ has maximum degree $k$ and
stretch factor $1+2\pi(k\cos{\frac{\pi}{k}})^{-1}$.
\end{theorem}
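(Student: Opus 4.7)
The plan is to construct $G'$ via a Yao-like selection on the Delaunay graph $G$, using bilateral endpoint agreement to enforce the degree bound. At each point $M \in P$, I would partition the plane into $k$ closed cones of angle $2\pi/k$ (using a fixed orientation across all vertices, for uniformity of analysis). In each cone at $M$ that contains at least one Delaunay neighbor, let $M$ \emph{select} the neighbor realizing the shortest Delaunay edge in that cone. Include $MX$ in $G'$ iff $MX$ is selected both by $M$ and by $X$. Since each vertex selects at most $k$ edges, this immediately yields the degree bound of $k$.

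The remainder of the proof controls the stretch factor. By Observation~\ref{optimal} it suffices to show that for every Delaunay edge $MX$ there is a path in $G'$ from $M$ to $X$ of length at most $\rho \cdot |MX|$, where $\rho = 1+2\pi(k\cos(\pi/k))^{-1}$. The first structural lemma I would prove is that if $U$ and $V$ are consecutive Delaunay neighbors of $M$ (no Delaunay neighbor of $M$ lies strictly between them angularly), then $UV$ is itself a Delaunay edge: $\triangle MUV$ is a Delaunay triangle, and by Definition~\ref{DelaunayDef} all three sides are Delaunay edges. This allows me to re-route any missing Delaunay edge $MX$ along the ``fan'' of consecutive Delaunay neighbors of $M$ inside the cone containing $X$.

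Using this, I would argue by induction on $|MX|$ over Delaunay edges. If $MX \in G'$ there is nothing to do. Otherwise some endpoint --- say $M$ --- failed to select $MX$, so $M$ selected a strictly shorter Delaunay edge $MY$ in the same cone. Let $Y = Y_0, Y_1, \ldots, Y_r = X$ be the consecutive Delaunay neighbors of $M$ lying in the angular interval from $Y$ to $X$ inside that cone. By the structural lemma, each $Y_iY_{i+1}$ is a Delaunay edge; a standard empty-circumcircle argument applied to $\triangle MY_iY_{i+1}$ (whose angle at $M$ is at most $2\pi/k$) controls $|Y_iY_{i+1}|$ well enough to make the induction go through on edge length. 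Concatenating $MY$ (itself in $G'$, or routed recursively if $Y$ was the rejecting endpoint) with the inductive paths $Y_i \leadsto Y_{i+1}$ produces a path from $M$ to $X$ in $G'$.

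The main obstacle is extracting exactly the constant $\rho = 1 + 2\pi/(k\cos(\pi/k))$. The ``$1$'' accounts for the direct chord $|MX|$, and the additive term $2\pi/(k\cos(\pi/k))$ must come from the detour length $\sum_i |Y_iY_{i+1}|$ measured against $|MX|$. I would obtain this by projecting the chords onto a circular arc centred at $M$ of radius roughly $|MX|\cos(\pi/k)$ --- the perpendicular distance from $M$ to the chord bounding the cone at radial distance $|MX|$ --- and noting that the total angular span at $M$ along the fan is at most $2\pi/k$, so the arc length is $2\pi |MX|/(k\cos(\pi/k))$. The delicate points are (i)~avoiding a compounding blow-up of the stretch factor across levels of the induction, which typically forces one to prove a stronger non-recursive invariant rather than the naive statement ``path length $\le \rho|MX|$'', and (ii)~handling the ``$X$ rejected $MX$'' case symmetrically with the ``$M$ rejected $MX$'' case without doubling the overhead; this interaction between the bilateral degree-enforcement rule and the path-routing is where I expect the real technical work to live.
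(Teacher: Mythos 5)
There is a genuine gap, and it is located exactly where you say you ``expect the real technical work to live.'' Your construction is the \emph{plain} bilateral Yao step on $G$, and your stretch argument is a recursion: re-route $MX$ through the fan of consecutive Delaunay neighbors $Y_0,\dots,Y_r$ and recurse on each fan edge not present in $G'$. Two things break. First, the recursion does compound: each level multiplies the detour overhead, so the naive induction cannot produce the additive constant $2\pi(k\cos(\pi/k))^{-1}$; you flag this but offer no invariant that fixes it, and indeed prior work that follows this recursive fan-routing route (Bose et al., Wang--Li) lands at constants like $\max\{\pi/2,\,1+\pi\sin(\alpha/2)\}\cdot C_{del}$, not at the bound claimed here. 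Second, your length estimate for the fan is unsound: the intermediate Delaunay neighbors $Y_i$ of $M$ lying angularly between $Y$ and $X$ need not lie anywhere near the disk of radius $|MX|$ around $M$, so $\sum_i|Y_iY_{i+1}|$ cannot be bounded by projecting onto an arc of radius $|MX|\cos(\pi/k)$; you would also need $|Y_iY_{i+1}|<|MX|$ for the induction on edge length to be well-founded, which is not guaranteed. (Your structural lemma that consecutive neighbors within the cone are Delaunay-adjacent is fine for points in general position.)

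The paper avoids recursion altogether, and this is the idea your proposal is missing. It proves (Lemma~\ref{canpath}) that for the shortest edge $CA$ in the cone and any other edge $CB$ there is a single replacement path from $A$ to $B$ --- the \emph{outward path} of Keil--Gutwin inside $\bigcirc{ABC}$ when $\triangle ABC$ is empty, or the \emph{inward path} through the convex hull of the interior points otherwise --- whose total length satisfies $|CA|+|p|\le(1+2\pi(k\cos(\pi/k))^{-1})|CB|$ because all of its vertices are confined to $\bigcirc{ABC}$ (or to concentric arcs dominated by it), and whose consecutive edges meet at angles at least $(k-2)\pi/k$, with $\angle CAM_1\ge\pi/2-\pi/k$. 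These large angles are then exploited by a \emph{modified} Yao step: in addition to the shortest edge per cone, each vertex selects extra edges flanking every maximal run of empty cones. That augmentation (absent from your construction) is precisely what guarantees, in Lemma~\ref{edgeselected}, that every edge of the replacement path is selected by both of its endpoints and hence lies in $G'$ outright --- so the stretch bound is obtained in one shot, with no compounding. Without the extra selections, the first path edge $AM_1$ may lose the Yao competition in its cone at $A$ and the argument collapses.
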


A linear time algorithm that computes $G'$ from $G$ is the key
component of our proof. This very simple algorithm essentially
performs a {\em modified Yao step} (see
Section~\ref{modifiedyaoalgo}) and selects up to $k$ edges out of
every point of $G$. $G'$ is simply the spanning subgraph of $G$ consisting
of edges chosen by {\em both} endpoints.

In order to describe the modified Yao step, we must first develop a
better understanding of the structure of the Delaunay graph $G$. Let
$CA$ and $CB$ be edges incident on point $C$ in $G$ such that
$\angle{BCA} \leq 2\pi/k$ and $CA$ is the shortest edge within the
angular sector $\angle{BCA}$. We will show how the above theorem easily
follows if, for every such pair of edges $CA$ and $CB$:
\begin{enumerate}

\item we show that there exists a path $p$ from $A$ to $B$ in $G$ of length $|p|$,
such that:\\ $|CA|$ $+$ $|p|$ $\leq (1+2\pi(k\cos{\frac{\pi}{k}})^{-1})|CB|$, and

\item we modify the standard Yao step to include the edges of this
path in $G'$, in addition to including the edges picked by the
standard Yao step but without increasing the number of edges chosen at each
point beyond $k$.
\end{enumerate}
This will ensure that: for any edge $CB \in G$ that is not included in $G'$
by the modified Yao step, there is a path from $C$ to $B$ in $G'$, whose
edges are all included in $G'$ by the modified Yao step, and whose cost is
at most $(1+2\pi(k\cos{\frac{\pi}{k}})^{-1})|CB|$. In the lemma below, we prove
the existence of this path and show some properties satisfied by edges of this
path; we will then modify the standard Yao step to include edges satisfying
these properties.
\begin{lemma}
\label{canpath} Let $k \geq 14$ be an integer, and let $CA$ and $CB$
be edges in $G$ such that $\angle{BCA} \leq 2\pi/k$ and $CA$ is the
shortest edge in the angular sector $\angle{BCA}$. There exists a
path $p:$~$A = M_0, M_1, ..., M_r = B$ in $G$ such that:

\begin{itemize}
\item[(i)] $|CA|+\sum_{i=0}^{r-1}|M_iM_{i+1}| \leq
(1+2\pi(k\cos{\frac{\pi}{k}})^{-1})|CB|$.
\item[(ii)] There is no edge in $G$ between any pair $M_i$ and $M_j$ lying in the closed region delimited by
$CA$, $CB$ and the edges of $p$, for any $i$ and $j$
satisfying $0 \leq i < j-1 \leq r$.
\item[(iii)] $\angle{M_{i-1}M_iM_{i+1}} > ({{k - 2} \over k})\pi$, for $i = 1,\cdots,
r-1$.
\item[(iv)] $\angle{CAM_1} \geq {\pi \over 2} - {\frac{\pi}{k}}$.
\end{itemize}
\end{lemma}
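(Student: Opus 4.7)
The plan is to take $p$ to be the outer boundary of the Delaunay ``fan'' around $C$ inside the angular sector $\angle BCA$. Let $M_1,\ldots,M_{r-1}$ denote the Delaunay neighbors of $C$ lying strictly in the open sector, listed in angular order from $A$ to $B$. Since $G$ is a triangulation of $P$ and $C\in P$, the Delaunay triangles incident to $C$ in the sector tile it as a fan $\triangle CAM_1,\triangle CM_1M_2,\ldots,\triangle CM_{r-1}B$, so each consecutive pair along $A=M_0,M_1,\ldots,M_r=B$ is joined by a Delaunay edge and $p$ is a valid path in $G$.

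Property (iv) comes from a one-triangle angle-side inequality. In $\triangle CAM_1$, the hypothesis that $CA$ is the shortest edge in the sector yields $|CM_1|\geq|CA|$, so $\angle CAM_1\geq\angle CM_1A$; combined with $\angle ACM_1\leq\angle BCA\leq 2\pi/k$ and the triangle angle sum, this gives $\angle CAM_1\geq (\pi-2\pi/k)/2=\pi/2-\pi/k$.

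Property (ii) is immediate from planarity: the union of the fan triangles is star-shaped at $C$, so the straight segment between any $M_i,M_j$ in this region with $j>i+1$ must strictly cross the Delaunay edge $CM_l$ for some $i<l<j$, precluding $M_iM_j$ from being an edge of $G$. For property (iii), the key input is the Delaunay empty-circle (equivalently, local-flip) criterion. At an interior $M_i$, the path angle on the side opposite $C$ equals $\angle CM_iM_{i-1}+\angle CM_iM_{i+1}$, and writing out the angle sums $\pi$ in the two adjacent fan triangles and rearranging gives
\[
\angle M_{i-1}M_iM_{i+1}=2\pi-(\angle M_{i-1}CM_i+\angle M_iCM_{i+1})-(\angle CM_{i-1}M_i+\angle CM_{i+1}M_i).
\]
The first bracket is at most $\angle BCA\leq 2\pi/k$; the second is the sum of the two angles opposite the shared edge $CM_i$ in the two adjacent Delaunay triangles, which is strictly less than $\pi$ under general position by the empty-circle criterion. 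Combining yields $\angle M_{i-1}M_iM_{i+1}>\pi-2\pi/k=((k-2)/k)\pi$.

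The main obstacle is property (i). The plan is to leverage (iii), (iv), and the sector bound $\sum_i\alpha_i\leq 2\pi/k$ (where $\alpha_i=\angle M_iCM_{i+1}$) to control the length. Writing $r_j=|CM_j|$, the law of cosines gives
\[
|M_iM_{i+1}|\leq |r_{i+1}-r_i|+2\sqrt{r_ir_{i+1}}\sin(\alpha_i/2).
\]
One then argues, using the empty-circle property of the fan triangles, that $r_i\leq|CB|$ throughout the fan and that the sequence $(r_i)$ is essentially monotone---wherever it is not, property (ii) permits a local reroute that restores monotonicity without increasing total length. After this reduction, the radial pieces telescope to $|CB|-|CA|$ and the tangential pieces sum to at most $2|CB|\sum\sin(\alpha_i/2)\leq 2\pi|CB|/k$. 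The extra $1/\cos(\pi/k)$ factor would then come from sharpening the tangential estimate using the near-perpendicularity at $A$ from (iv) and the small-turning bound from (iii), yielding $|CA|+\sum|M_iM_{i+1}|\leq(1+2\pi(k\cos(\pi/k))^{-1})|CB|$. Making the $r_i\leq|CB|$ and monotonicity-reduction steps fully rigorous from the empty-circle criterion is the delicate technical part of the argument.
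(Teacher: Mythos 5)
Your construction (the link of $C$ restricted to the sector, i.e.\ the fan of Delaunay triangles at $C$ between $CA$ and $CB$) is a legitimate path in $G$, and your arguments for properties (ii), (iii) and (iv) are essentially correct and in fact more direct than the paper's: (iii) via the sum-of-opposite-angles form of the local Delaunay criterion applied to the two fan triangles sharing $CM_i$ is a clean one-line derivation, whereas the paper obtains (iii) by induction over its recursive path construction. (Minor slip: the angle you compute is the one on the side \emph{containing} $C$, not opposite it, but that is the angle the lemma needs.)

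The genuine gap is in (i), which is the heart of the lemma, and the intermediate claims you flag as ``delicate'' are in fact false for the fan path. First, $r_i=|CM_i|\leq|CB|$ fails: take four points $C,A,B,M$ with $\angle BCA\le 2\pi/k$, $|CA|=|CB|$, and $M$ inside $\bigcirc{ABC}$ just beyond chord $AB$ near the antipode of $C$. Then $AB$ is not Delaunay, $CM$ is the diagonal of the quadrilateral $CAMB$, the fan path is $A,M,B$, and $|CM|$ can be arbitrarily close to the diameter $2R$ of $\bigcirc{ABC}$, which exceeds $|CB|=2R\sin(\angle BAC)$. Second, the same example kills monotonicity of $(r_i)$ (here $|CA|=|CB|$ but $r$ rises above both in the middle), so the radial terms do not telescope to $|CB|-|CA|$; and property (ii) is a purely negative statement (absence of chords), so it cannot ``permit a local reroute'' --- there are no alternative edges to reroute along. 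The polar decomposition about $C$ is simply the wrong tool for the portion of the path that excurses beyond chord $AB$. What the paper does instead is bound that portion by the \emph{arc} $\stackrel{\frown}{AB}$ of $\bigcirc{ABC}$ using the Keil--Gutwin nested-circles induction (each intermediate point is enclosed by a strictly smaller circle through the same chord endpoints), and it recovers your intended monotone telescoping only for the convex-hull vertices $N_0,\dots,N_s$ of the points interior to $\triangle ABC$, for which $|CN_i|\le|CN_{i+1}|\le|CB|$ genuinely holds; the two estimates are then spliced by comparing the arcs $\alpha_i$ to arcs of concentric circles centered at $C$. The factor $1/\cos(\pi/k)$ arises from relating $|AB|$ to the circumradius via $\angle BAC\ge\pi/2-\pi/k$, not from sharpening a tangential estimate. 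Without the arc bound, your argument cannot be completed.
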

We break down the proof of the above lemma into two cases: when
$\triangle{ABC}$ contains no point of $G$ in its interior, and when there
 are points of $G$ inside $\triangle{ABC}$. We define some additional notation
and terminology first. We define the circle $(O) = \bigcirc{ABC}$ with center
$O$, and set $\Theta = \angle{BCA}$. Note that $\angle{AOB} =2\Theta \leq
4\pi/k$. We will use $\stackrel{\frown}{AB}$ to denote the arc of $(O)$
determined by points $A$ and $B$ and facing $\angle{AOB}$. We will make use
of the following easily verified Delaunay graph property:
\begin{proposition}
\label{Euclidproperty} If $CA$ and $CB$ are edges of $G$ then the region inside $(O)$
subtended by chord $CA$ and away from $B$ and the region inside $(O)$ subtended
by chord $CB$ and away from $A$ contain no points.
\end{proposition}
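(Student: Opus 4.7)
The plan is to exploit Definition~\ref{DelaunayDef} directly. Since $CA$ is an edge of the Delaunay graph $G$, there exists a witness circle $(O_1)$ passing through $C$ and $A$ whose open interior is empty of points of $P$. I will show that the region $R_A$ inside $(O)$ bounded by chord $CA$ on the side opposite to $B$ is contained in the open interior of $(O_1)$; emptiness of $R_A$ is then immediate, and the analogous statement for chord $CB$ follows by interchanging the roles of $A$ and $B$.

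To compare $(O)$ and $(O_1)$, I would study the pencil of all circles passing through the two fixed points $C$ and $A$. Each such circle has its center on the perpendicular bisector of $CA$, so the pencil is parameterized by a single real number $s$ measuring signed distance along that bisector, which I take to be positive on the side of line $CA$ opposite to $B$. Choosing coordinates so that $CA$ lies on the $x$-axis with midpoint at the origin, the defining equation of the circle with parameter $s$ is $x^2 + y^2 + 2sy = (|CA|/2)^2$. A one-line check then shows: for a point $(x, y)$ with $y < 0$ (the $\bar B$-side), membership in the open disk of parameter $s$ is equivalent to $s$ exceeding a threshold depending only on $(x, y)$; for $y > 0$, the opposite inequality holds. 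Thus the pencil's disks grow monotonically on the $\bar B$-side and shrink on the $B$-side as $s$ increases.

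The core step is now one line. Let $s_O$ and $s_1$ be the parameters of $(O)$ and $(O_1)$ in this pencil. Because $B$ lies on the $B$-side arc of $(O)$, its associated threshold equals exactly $s_O$; because $B \in P$, the point $B$ cannot lie strictly inside $(O_1)$, which forces $s_1 \geq s_O$. Consequently, every point of $R_A$, whose threshold is strictly less than $s_O$ (since it lies inside $(O)$ on the $\bar B$-side), has threshold less than $s_1$ as well, and hence lies in the open interior of $(O_1)$. As that interior is empty of points of $P$, so is $R_A$.

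The only real obstacle is articulating the monotonicity of the pencil cleanly, but the coordinate computation above bypasses any case analysis. Once it is in hand, the argument is essentially immediate, and the symmetric case for chord $CB$ requires nothing new.
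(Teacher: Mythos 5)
Your proof is correct. The paper does not actually prove Proposition~\ref{Euclidproperty} --- it is introduced as an ``easily verified'' Delaunay property --- so there is no written argument to match yours against; the closest the paper comes is the proof of the unit-disk analogue, Lemma~\ref{UnitDiskproperty}, which simply asserts that the segment of $(O)$ cut off by chord $CA$ away from $B$ lies inside the empty witness circle through $C$ and $A$. Your argument supplies exactly the justification that assertion needs. The genuinely nontrivial point, which you correctly isolate, is that an arbitrary witness circle $(O_1)$ for the Delaunay edge $CA$ need not contain that segment a priori; it is only because $B\in P$ is barred from the interior of $(O_1)$ that the pencil parameter satisfies $s_1\ge s_O$, and the monotone growth of the pencil's disks on the $\bar{B}$-side of line $CA$ then yields the containment. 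The coordinate equation $x^2+y^2+2sy=(|CA|/2)^2$ makes that monotonicity a one-line check, and the case of chord $CB$ is symmetric, as you say. One harmless boundary remark you could add: points of $P$ on the open segment $CA$ are also excluded, since the open chord lies in the interior of every circle of the pencil.
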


\subsection{The Outward Path}
\label{outpath} We consider first the case when no points of $G$ are
inside $\triangle{ABC}$. Since both $CA$ and $CB$ are edges in $G$ and by
Proposition~\ref{Euclidproperty}, the region of $(O)$ subtended by chord $AB$
closer to $C$ has no points of $G$ in its interior. Keil and Gutwin~\cite{keil}
showed that, in this case, there exists a path between $A$ and $B$ in $G$
inside the region of $(O)$ subtended by chord $AB$ away from $C$, whose
length is bounded by the length of $\stackrel{\frown}{AB}$
(see Lemma~1 in~\cite{keil}). We find it convenient to use a recursive
definition of their path (for more details, we refer the reader
to~\cite{keil}):
\begin{enumerate}
\item {\bf Base case:} If $AB \in G$, the path consists of edge $AB$.

\item {\bf Recursive step:} Otherwise, a point must reside in the
region of $(O)$ subtended by chord $AB$ and away from $C$.
Let $T$ be such a point with the property that the region of
$\bigcirc{ATB}$ subtended by chord $AB$ closer to $T$ is empty.
We call $T$ an {\em intermediate point} with respect to the pair of
points $(A,B)$. Let $(O_1)$ be the circle passing through $A$ and
$T$ whose center $O_1$ lies on segment $AO$ and let $(O_2)$ be the
circle passing through $B$ and $T$ whose center $O_2$ lies on
segment $BO$. Then both $(O_1)$ and $(O_2)$ lie inside $(O)$, and
$\angle{AO_1T}$ and $\angle{TO_2B}$ are both less than $\angle{AOB}
\leq 4\pi/k$. Moreover, the region of $(O_1)$ subtended by chord
$AT$ that contains $O_1$ is empty, and the region of $(O_2)$
subtended by chord $BT$ and containing $O_2$ is empty. Therefore, we
can recursively construct a path from $A$ to $T$ and a path from $T$
to $B$, and then concatenate them to obtain a path from $A$ to $B$.
\end{enumerate}

\begin{definition}\rm
\label{outwardpathDEF} We call the path constructed above the {\em
outward path} between $A$ and $B$.
\end{definition}
Keil and Gutwin~\cite{keil}, from this point on, use
a purely geometric argument (with no use of Delaunay graph
properties) to show that the length of the obtained path $A = M_0,
M_1,\cdots, M_r = B$ (where each point $M_p$, for $p = 1, \cdots, r-1$, is
an intermediate point with respect to a pair $(M_i, M_j)$, where $0
\leq i < p < j \leq r$) is smaller than the length of
$\stackrel{\frown}{AB}$. Figure~\ref{outwardpath} illustrates an
outward path between $A$ and $B$.

\begin{figure}[htbp]

\begin{center} \small

\begin{pspicture}(-10,-10)(220,100)


\psarc[linecolor=darkgray, linewidth=0.3pt](100,-52){113}{20}{160}
\psarc[linewidth=0.5pt](0,0){25}{0}{35}

\cnode*(0,0){2pt}{c}\nput{-160}{c}{$C$}
\cnode*(200,0){2pt}{b}\nput{-20}{b}{$B = M_3$}
\cnode*(90,61){2pt}{a}\nput{90}{a}{$A= M_0$}

\cnode*(145,45){2pt}{m1}\nput[labelsep=8pt]{35}{m1}{$M_1$}
\cnode*(180,24){2pt}{m2}\nput{20}{m2}{$M_2$}

\ncline{c}{b} \ncline{b}{m2} \ncline{m2}{m1} \ncline{m1}{a}
\ncline{a}{c} \ncline{c}{m1} \ncline{c}{m2}

\put(0,40){\scriptsize $\leq 2\pi/k$}\pnode(10,40){a1}
\pnode(15,10){a2} \ncline[nodesepA=5,nodesepB=5]{->}{a1}{a2}

\end{pspicture} \caption{Illustration of an outward path.}\label{outwardpath}
\end{center}
\end{figure}
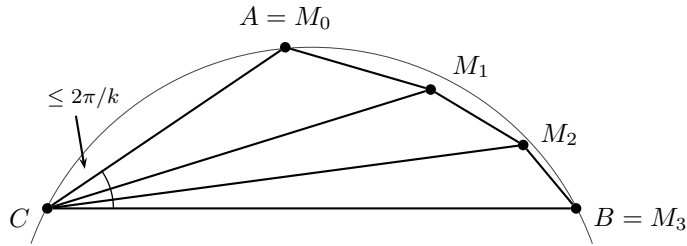

\begin{proposition}
\label{nocross} In every recursive step of the outward path
construction described above, if $M_p$ is an intermediate point with
respect to a pair of points $(M_i,M_j)$, then:
\begin{itemize}

\item[(a)] there is a circle passing through $C$ and $M_p$ that contains
no point of $G$, and

\item[(b)] circles $\bigcirc{CM_iM_p}$ and $\bigcirc{CM_jM_p}$ contain no
points of $G$ except, possibly, in the region subtended by chords $M_iM_p$
and $M_pM_j$, respectively, away from $C$.
\end{itemize}
\begin{proof}
We assume, by induction, that there are circles $(O_{M_i})$ and
$(O_{M_j})$ passing through $C$ and $M_i$, and $C$ and $M_j$,
respectively, containing no points of $G$, and that the circle
$(O)=\bigcirc{CM_iM_j}$ contains no point of $G$ in the interior of
the region $R'$ subtended by chord $M_iM_j$ closer to $C$.
(This is certainly true in the base case because $CA, CB \in G$, by
Proposition \ref{Euclidproperty}, and by our initial assumptions).

Since $M_iM_j$ is not an edge in $G$, the point $M_p$ chosen in the
construction is the point with the property that the region $R$ of
$\bigcirc{M_iM_pM_j}$ subtended by chord $M_iM_j$ away from $C$, contains
no point of $G$. Then the circle passing
through $C$ and $M_p$ and tangent to $\bigcirc{M_iM_pM_j}$ at $M_p$
is completely inside $(O_{M_i}) \cup (O_{M_j}) \cup R \cup R'$, and
therefore devoid of points of $G$. This proves part (a).

The region of $\bigcirc{CM_iM_p}$ subtended by chord
$M_iM_p$ and containing $C$ is inside $(O_{M_i}) \cup R \cup R'$,
and therefore contains no point of $G$ in its interior. The same is
true for the region of $\bigcirc{CM_jM_p}$ subtended by chord
$M_jM_p$ and containing $C$, and part (b) holds as well.
\end{proof}
\end{proposition}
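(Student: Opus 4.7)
I would prove the proposition by structural induction following the recursive construction of the outward path, maintaining at each recursive call on a pair $(M_i,M_j)$ the joint invariant that (1)~there exist empty circles $(O_{M_i})$ and $(O_{M_j})$ passing through $C,M_i$ and through $C,M_j$ respectively, and (2)~the region of $\bigcirc{CM_iM_j}$ subtended by chord $M_iM_j$ on the side of $C$ contains no point of $G$. Parts (a) and (b) of the proposition are precisely what is needed to propagate this invariant from $(M_i,M_j)$ to the two subproblems $(M_i,M_p)$ and $(M_p,M_j)$: part (a) supplies the empty circle through $C$ and the new endpoint $M_p$, and part (b) supplies the analogue of the $R'$-region for each subproblem.

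\textbf{Base case.} For the initial pair $(A,B)$, emptiness of a circle through $C$ and $A$ (resp.\ $C$ and $B$) follows from Definition~\ref{DelaunayDef} because $CA$ and $CB$ are edges of the Delaunay graph. The region of $\bigcirc{CAB}$ subtended by $AB$ on the $C$-side decomposes into $\triangle ABC$ together with the two crescents cut off inside $\bigcirc{CAB}$ by chords $CA$ (away from $B$) and $CB$ (away from $A$): the triangle is empty by the case hypothesis of Section~\ref{outpath}, and the two crescents are empty by Proposition~\ref{Euclidproperty}.

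\textbf{Inductive step.} By the recursive construction, $M_p$ is chosen so that the region $R$ of $\bigcirc{M_iM_pM_j}$ subtended by $M_iM_j$ away from $C$ contains no points of $G$. For part (a) I would take the circle $\tau$ through $C$ and $M_p$ that is internally tangent to $\bigcirc{M_iM_pM_j}$ at $M_p$; tangency forces the portion of $\tau$ on the far side of $M_iM_j$ from $C$ to lie inside $R$, while the portion on the $C$-side lies inside $(O_{M_i}) \cup (O_{M_j}) \cup R'$, all of which are empty by the inductive hypothesis or by the choice of $M_p$. For part (b), the region of $\bigcirc{CM_iM_p}$ subtended by chord $M_iM_p$ on the $C$-side is covered by $(O_{M_i}) \cup R \cup R'$ (again empty), and the claim for $\bigcirc{CM_jM_p}$ is symmetric.

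\textbf{Main obstacle.} The delicate step is justifying the geometric containments, in particular that the portion of the tangent circle $\tau$ lying on the $C$-side of chord $M_iM_j$ is genuinely swallowed by $(O_{M_i}) \cup (O_{M_j}) \cup R'$. This rests on the way the three circles $(O_{M_i})$, $(O_{M_j})$ and $\bigcirc{CM_iM_j}$ fit together across $M_iM_j$ and on the crucial fact that $\tau$ is \emph{internally} tangent to $\bigcirc{M_iM_pM_j}$ at $M_p$, so it lies strictly inside that circle near $M_p$ and cannot escape the empty union elsewhere. Once these containment claims are made precise, both (a) and (b) follow immediately and the invariant is re-established for each recursive subcall.
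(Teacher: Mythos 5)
Your proposal is correct and follows essentially the same route as the paper: the same two-part inductive invariant (empty circles through $C,M_i$ and $C,M_j$, plus emptiness of the $C$-side region of $\bigcirc{CM_iM_j}$), the same tangent-circle construction for part (a), and the same coverage by $(O_{M_i}) \cup (O_{M_j}) \cup R \cup R'$. Your base-case decomposition into the triangle and the two crescents is just a more explicit spelling-out of the paper's appeal to Proposition~\ref{Euclidproperty} and the case hypothesis, and like the paper you assert rather than fully verify the containment claims, so there is no substantive difference.
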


We are now ready to prove Lemma~\ref{canpath} in the case when no
point of $G$ lies inside $\triangle{ABC}$. In this case we define
the path in Lemma~\ref{canpath} to be the outward path between $A$
and $B$.
\begin{proof}[Proof of Lemma~\ref{canpath} for the case of outward
path.]
\begin{itemize}
\item[]
\item[$(i)$] With $\Theta = \angle{BCA}$, we have $|\stackrel{\frown}{AB}| =
2 \Theta \cdot |OA|$ and $\sin{\Theta}= {|AB| / (2|OA|)}$. We note that
$|CA|+|\stackrel{\frown}{AB}|$ is largest
when $|CA| = |CB|$, i.e. when $CA$ and $CB$ are symmetrical with
respect to the diameter of $\bigcirc{CAB}$ passing through $C$; this
follows from the fact that the perimeter of a
convex body is not smaller than the perimeter of a convex body
containing it (see page 42 in~\cite{bookconvexity}). If $|CA| = |CB|$,
$\sin{\Theta \over 2} = {|AB| \over 2|CB|}$. Using elementary
trigonometry, it follows from the above facts and from
$|CA| \leq |CB|$ that:
\begin{eqnarray*}
|CA|+|\stackrel{\frown}{AB}| &   \leq & |CB|+2 \Theta \cdot |OA| =
|CB|+ ({\Theta \over \sin{\Theta}}) \cdot |AB| =  |CB| + ({\Theta
\over \cos{\Theta \over 2}}) \cdot |CB| \\
		 & \leq & (1+2\pi(k\cos{\frac{\pi}{k}})^{-1})|CB|.
\end{eqnarray*}
The last inequality follows from $\Theta \leq 2\pi/k$ and $k
> 2$.

\item[$(ii)$] If $M_iM_j$ was an edge in $G$ then, for every $p$ between $i$
and $j$, the circle $\bigcirc{M_iM_pM_j}$ would not contain $C$. This,
however, contradicts part (a) of Proposition~\ref{nocross}.

\item[$(iii)$] If the outward path contains a single intermediate point $M_1$,
then since $M_1$ lies inside $(O)=\bigcirc{CAB}$, $\angle{AM_1B}
\geq \pi - \angle{AOB}/2 \geq \pi - 2\pi/k = (k-2)\pi/k$ (note that
$\angle{AOB} = 2 \cdot \angle{ACB}$), as desired. Now the statement
follows by induction on the number of steps taken to construct the
outward path between $A$ and $B$, using the fact (proved
in~\cite{keil}) that each angle $\angle{M_{i-1}O_iM_{i+1}}$ at the
center of the circle $(O_i)$ defining the intermediate point $M_i$,
is bounded by $\angle{AOB}$.
\item[$(iv)$] This follows from the fact that $\angle{CAM_1} \geq
\angle{CAB} \geq \pi/2 - \pi/k$. The last inequality is true
because $|CA| \leq |CB|$ and $\angle{BCA} \leq 2\pi/k$ in
$\triangle{CAB}$.
\end{itemize}
\end{proof}

\subsection{The Inward Path}
\label{inwardpatheuclidean} We consider now the case when the
interior of $\triangle{ABC}$ contains points of $G$. Let $S$ be the
set of points consisting of points $A$ and $B$ plus all the points
interior to $\triangle{ABC}$ (note that $C \notin S$). Let $CH(S)$
be the points on the convex hull of $S$. Then $CH(S)$ consists of
points $N_0=A$ and $N_s = B$, and points $N_1, ..., N_{s-1}$ of $G$
interior to $\triangle{ABC}$. We have the following proposition:

\begin{proposition}
\label{inwardproperties} For every $i =1, \cdots, s-1:$
\begin{itemize}
\item[(a)] $CN_i \in G$,
\item[(b)] $|CN_i| \leq |CN_{i+1}|$, and
\item[(c)] $\angle{N_{i-1}N_iN_{i+1}} \geq \pi$, where $\angle{N_{i-1}N_iN_{i+1}}$ is the angle facing point
$C$.
\end{itemize}
\begin{proof}
These follow follow from the following facts: $CA$ and $CB$ are edges in $G$,
$CA$ is the shortest edge in its cone, and hence $|CA| \leq |CN_i|$, for $i=0,
\cdots, s$, and points $N_0, \cdots, N_s$ are on $CH(S)$ in the listed order.
\end{proof}
\end{proposition}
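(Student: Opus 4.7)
The plan is to address the three parts in the order (c), (a), (b), since (a) uses flexibility granted by (c), and (b) combines (a) with the shortest-edge hypothesis together with the convex geometry of the near chain.

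Part (c) is essentially convexity of $CH(S)$. The vertices $N_0{=}A, N_1, \ldots, N_{s-1}, N_s{=}B$ lie on the chain of $CH(S)$ visible from $C$, so at each interior vertex $N_i$ the interior angle of the convex polygon $CH(S)$ on the side away from $C$ is at most $\pi$; the complementary angle $\angle N_{i-1}N_iN_{i+1}$ on the $C$-facing side is therefore at least $\pi$.

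For (a), I plan to invoke Definition~\ref{DelaunayDef} by exhibiting a circle through $C$ and $N_i$ containing no point of $G$ in its interior. Since $N_i$ lies on the near chain from $C$, there is a supporting line $\ell$ of $CH(S)$ at $N_i$ strictly separating $C$ from every other point of $S$, and (c) supplies a one-parameter family of such $\ell$. I take $(D)$ to be the circle through $C$ and $N_i$ tangent to $\ell$ at $N_i$: its open disk lies strictly on the $C$-side of $\ell$ and hence avoids $S\setminus\{N_i\}$. Points of $G$ outside $\triangle ABC$ are excluded by combining Proposition~\ref{Euclidproperty} (which empties the two circular segments of $\bigcirc{CAB}$ adjacent to $CA$ and $CB$) with a careful choice of $\ell$ from the admissible family, so that $(D)$ stays away from the remaining regions (the cap of $\bigcirc{CAB}$ beyond $AB$, and the portion of the exterior of $\bigcirc{CAB}$ on that side).

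For (b), I will proceed by contradiction. Suppose $|CN_i|>|CN_{i+1}|$ for some consecutive pair on the near chain. Since $CA$ is the shortest edge in the sector and $CN_{i+1}\in G$ by (a), we have $|CA|\leq|CN_{i+1}|<|CN_i|$. I will combine the empty circle through $C$ and $N_{i+1}$ from (a) with the convex turn at $N_{i+1}$ (from (c)) and the narrow-sector bound $\angle ACB\leq 2\pi/k$ to show that $N_i$ must then lie inside this empty circle, contradicting its emptiness.

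\textbf{Main obstacle.} The technical heart of the argument is (a): verifying that $(D)$ contains no point of $G$ at all---not merely no point of $S$---requires careful placement of $\ell$ so that Proposition~\ref{Euclidproperty} and the narrow-sector geometry together cover the regions outside $\triangle ABC$ that $(D)$ might otherwise touch. Once (a) is in hand, (c) is immediate and (b) follows via the contradiction above.
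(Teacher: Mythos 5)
Your part (c) is correct, and you have rightly identified (a) as the crux, but the construction you propose for (a) does not close, and the region you flag as the obstacle is not the one that kills it. The fatal region is the exterior of $\bigcirc{ABC}$ on the \emph{near} side of $AB$, i.e.\ ``behind'' $C$. Among all circles through $C$ and $N_i$, the only one contained in the closed disk bounded by $\bigcirc{ABC}$ is the unique circle internally tangent to $\bigcirc{ABC}$ at $C$; every other member of the pencil crosses $\bigcirc{ABC}$ transversally at $C$, so its interior contains a wedge of points arbitrarily close to $C$ lying outside $\bigcirc{ABC}$. Proposition~\ref{Euclidproperty} only empties the two lunes \emph{inside} $\bigcirc{ABC}$, and nothing in the hypotheses prevents $P$ from having a point in that exterior wedge (for instance a very close neighbor of $C$ in a direction just outside the union of the sector and the two lunes; one can check such a point is compatible with $CA,CB\in G$ and with $CA$ being shortest in the sector). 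Your circle $(D)$ is determined by its tangent direction at $N_i$, and the admissible supporting directions form a (possibly tiny) arc that has no reason to contain the single direction yielding the tangent-at-$C$ circle --- and even that one circle may reach the cap beyond $AB$ or swallow points of $S$. So no choice of $\ell$ rescues the argument. A proof of (a) that works uses the local Delaunay structure at $C$ rather than only the two empty lunes: the Delaunay neighbors of $C$ inside the sector form a chain $A=X_0,\dots,X_t=B$ whose fan of triangles $\triangle{CX_jX_{j+1}}$ have empty circumcircles and cover the sector near $C$; if some $N_i$ were not an $X_j$, the segment $CN_i$ would exit the fan through an edge $X_jX_{j+1}$ at an interior point $Y$, and since each $X_j$ is either in $S$ or separated from $C$ by the segment $AB\subseteq CH(S)$, any supporting line of $CH(S)$ at $N_i$ would place $Y$ on the far side from $C$ while $Y$ lies on the open segment $CN_i$ --- a contradiction. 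Hence every $N_i$ is a Delaunay neighbor of $C$, and $|CA|\le|CN_i|$ follows from the shortest-edge hypothesis.

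Your plan for (b) is also off: an empty circle through $C$ and $N_{i+1}$ can hug the segment $CN_{i+1}$ on the side away from $N_i$, so there is no reason $N_i$ must lie inside it. No circles are needed here. Since $N_i$ is a hull vertex strictly on the $C$-side of the chord $N_{i-1}N_{i+1}$, the point where ray $CN_i$ meets that chord is farther from $C$ than $N_i$ and no farther than the farther endpoint, giving $|CN_i|<\max(|CN_{i-1}|,|CN_{i+1}|)$; combined with the base inequality $|CN_0|=|CA|\le|CN_1|$ from (a) and the hypothesis, a one-line induction yields $|CN_i|\le|CN_{i+1}|$ for all $i$. (The paper compresses all of this into a single sentence, so you were right that real work is hidden here, but the work needed for (a) is of a different kind than what you propose.)
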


Since $|CN_i| \leq |CN_{i+1}|$ and no point of $G$ lies inside
$\triangle{N_iCN_{i+1}}$ ($N_i$ and $N_{i+1}$ are on $CH(S)$),
$CN_i$ is the shortest edge in the angular sector
$\angle{N_iCN_{i+1}}$. Since $\angle{N_iCN_{i+1}} \leq
\angle{BCA}\leq 2\pi/k$, by Lemma~\ref{canpath} there exists an
outward path $P_i$ between $N_i$ and $N_{i+1}$, for every $i = 0, 1,
\cdots, s-1$, satisfying all the properties of Lemma~\ref{canpath}.
Let $A=M_0, M_1, \cdots, M_r=B$ be the concatenation of the paths
$P_i$, for $i=0, \cdots, r-1$.

\begin{definition}\rm
\label{inwardpathDEF} We call the path $A=M_0, M_1, \cdots, M_r=B$
constructed above the {\em inward path} between $A$ and $B$.
\end{definition}
Figure~\ref{inwardpath} illustrates an inward path between $A$ and
$B$.

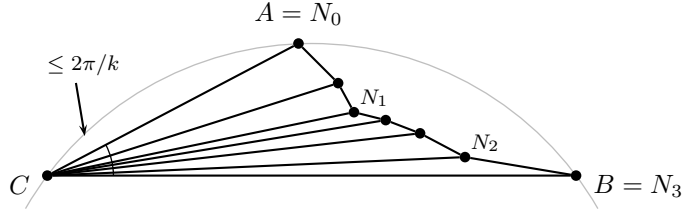
\begin{figure}[htbp]
\begin{center} \small
\begin{pspicture}(-10,-10)(220,80)

\psarc[linecolor=lightgray, linewidth=0.5pt](100,-75){125}{30}{150}
\psarc[linewidth=0.5pt](0,0){25}{0}{28}

\cnode*(0,0){2pt}{c}\nput{-160}{c}{$C$}
\cnode*(200,0){2pt}{b}\nput{-20}{b}{$B = N_3$}
\cnode*(95,50){2pt}{a}\nput{90}{a}{$A= N_0$}

\cnode*(110,35){2pt}{m1}
\cnode*(116,24){2pt}{m2}\nput[labelsep=1pt]{45}{m2}{\scriptsize
$N_1$}
\cnode*(128,21){2pt}{m3}
\cnode*(141,16){2pt}{m4}
\cnode*(158,7){2pt}{m5}\nput[labelsep=1pt]{45}{m5}{\scriptsize
$N_2$}


\ncline{c}{b} \ncline{c}{m1} \ncline{c}{m2} \ncline{c}{m3}
\ncline{c}{m4} \ncline{c}{m5} \ncline{b}{m5} \ncline{m5}{m4}
\ncline{m4}{m3} \ncline{m3}{m2} \ncline{m2}{m1} \ncline{m1}{a}
\ncline{a}{c}

\put(0,40){\scriptsize $\leq 2\pi/k$}\pnode(10,40){a1}
\pnode(15,10){a2} \ncline[nodesepA=5,nodesepB=5]{->}{a1}{a2}
\end{pspicture}
\caption{Illustration of an inward path.}\label{inwardpath}
\end{center}
\end{figure}

We now prove Lemma~\ref{canpath} in the case when there are points
of $G$ interior to $\triangle{ABC}$. In this case we define the path
in Lemma~\ref{canpath} to be the inward path between $A$ and $B$.

\begin{proof}[Proof of Lemma~\ref{canpath} for the case of inward
path.]
\begin{itemize}
\item[]
\item[$(i)$]

Define $A''$ to be a point on the half-line $[CA$ such that $|CA''|
= |CB|$, and let $(O'') =\bigcirc{CA''B}$. Denote by $\alpha''$ the
length of the arc of $\bigcirc{CA''B}$ subtended by chord $A''B$ and
facing $\angle{A''CB}$. For every $i = 0, 1, \cdots, s-1$, we define
arc $\alpha_i$ to be the arc of $\bigcirc{CN_iN_{i+1}}$ subtended by
chord $N_iN_{i+1}$ and facing $\angle{N_iCN_{i+1}}$. For every $i =
0, 1, ..., s-1$, we define $N_i'$ to be the point on the half-line
$[CN_i$ such that $|CN_i'| = |CN_{i+1}|$, $(O_i)$ to be the circle
$\bigcirc{CN_i'N_{i+1}}$, and $\alpha'_{i}$ to be the arc of $(O_i)$
subtended by chord $N_i'N_{i+1}$ and facing $\angle{N_i'CN_{i+1}}$.
Finally, for every $i =0, \cdots, s-1$, we define $N_i''$ to be the
point of intersection of the half-line $[CN_i$ and circle $(O'')$,
and $\alpha''_i$ to be the arc of $(O'')$ subtended by chord
$N_i''N''_{i+1}$ and facing $\angle{N_i''CN''_{i+1}}$. As shown in
section~\ref{outpath}, the length of the outward path $P_i$ between
$N_i$ and $N_{i+1}$ is bounded by the length of $\alpha_i$. Since
the convex body $C_1$ delimited by $CN_i$, $CN_{i+1}$ and $\alpha_i$
is contained inside the convex body $C_2$ delimited by $CN_i'$,
$CN_{i+1}$ and $\alpha'_i$, by~\cite{bookconvexity}, the perimeter
of $C_1$ is not larger than that of $C_2$. Denoting by $|P_i|$ the
length of path $P_i$, we get:
\begin{eqnarray}
|P_i| \leq |N_iN'_{i}| + \alpha'_i, \ \ \ \ i = 1, \cdots, s-1.
\label{eq1}
\end{eqnarray}
Since $(O_i)$ and $(O'')$ are concentric circles (of center $C)$,
and the radius of $(O_i)$ is not larger than that of $(O'')$, we
have $\alpha'_i \leq \alpha''_i$, for $i=0, \cdots, s-1$. It follows
from Inequality~(\ref{eq1}) that:
\begin{eqnarray}
|P_i| \leq |N_iN'_{i}| + \alpha''_i, \ \ \ \ i = 1, \cdots, s-1.
\label{eq2}
\end{eqnarray}
Using Inequalities~(\ref{eq1}) and (\ref{eq2}) we get:
\begin{eqnarray}
\hspace*{-2mm} |CA| + \sum_{i=0}^{s-1}|P_i| \leq |CA| +
\sum_{i=0}^{s-1} |N_iN'_{i}| + \sum_{i=0}^{s-1} \alpha''_i.
\hspace*{-2mm}\label{eq3}
\end{eqnarray}
Noting that $\sum_{i=0}^{s-1} |N_iN'_{i}| = |CB| - |CA|$, that
$\sum_{i=0}^{r-1} \alpha''_i = \alpha''$, and using the same argument as in
part ($i$) of Lemma~\ref{canpath}) completes the proof.

\item[$(ii)$] Since $CN_p \in G$ for $p=1, \cdots, s-1$ by part $(a)$ of
Proposition~\ref{inwardproperties}, by planarity of $G$, if such an
edge between two points $M_i$ and $M_j$ exists, then $M_i$ and $M_j$
must belong to an outward path between two points $N_p$ and
$N_{p+1}$ of $CH(S)$. But this contradicts part $(ii)$ of
Lemma~\ref{canpath} for the case of the outward path applied to
$N_p$ and $N_{p+1}$.

\item[$(iii)$] For each $i= 0, \cdots, r$, either $M_i =N_j \in CH(S)$,
or $M_i$ is an intermediate point on the outward path between two
points $N_p$ and $N_q$ in $CH(S)$. In the former case
$\angle{M_{i-1}M_iM_{i+1}} \geq \angle{N_{j-1}M_iN_{j+1}} \geq \pi
\geq (k-2)\pi/k$ for $k\geq 14$ ($N_{j-1}$ and $N_j$ are points
before and after $M_i=N_j$ on $CH(S)$), by part (c) of
Proposition~\ref{inwardproperties}. In the latter case
$\angle{M_{i-1}M_iM_{i+1}} \geq (k-2)\pi/k$ by the proof of part
$(iii)$ of Lemma~\ref{canpath} applied to the outward path between
$N_p$ and $N_q$.

\item[$(iv)$] This follows from $|CA|=|CM_0| \leq |CM_1|$ and
$\angle{ACM_1} \leq \angle{ACB} \leq 2\pi/k$, in triangle
$\triangle{CAM_1}$.
\end{itemize}
\end{proof}

\subsection{The Modified Yao Step}
\label{modifiedyaoalgo}

We now augment the {\em Yao step} so edges forming the paths described in
Lemma~\ref{canpath} are included in $G'$, in addition to the edges chosen
in the standard Yao step. Lemma~\ref{canpath} says that consecutive edges
on such paths form moderately large angles. The modified Yao step will
ensure that consecutive edges forming large angles are included in $G'$.
The algorithm is described in Figure~\ref{yaostep}.

\begin{algorthm}{The modified Yao Step.}{yaostep} \algtitle{Modified Yao step}
\alginput{A Delaunay graph $G$; integer $k \geq 14$} \algoutput{A
subgraph $G'$ of $G$ of maximum degree $k$}
\begin{codeblock}

\step define $k$ disjoint cones of size $2\pi/k$ around every point
$M$ in $G$;

\step  in every non-empty cone, select the shortest edge incident on
$M$ in this cone;

\step {\bf for} every maximal sequence of $\ell \geq 1$ consecutive
empty cones:

\hspace*{3mm} 3.1. {\bf if} $\ell > 1$ {\bf then} select the first
$\lfloor \ell/2 \rfloor$ unselected  \\
\hspace*{7.9mm}   incident edges on $M$ clockwise from the sequence
\\ \hspace*{7.9mm} of empty cones and the first $\lceil \ell/2 \rceil$ unselected  \\ \hspace*{7.9mm} edges
incident on $M$ counterclockwise from the \\
\hspace*{7.9mm} sequence  of empty cones; \\
\hspace*{3mm} 3.2. {\bf else} (i.e., $\ell = 1$) let $MX$ and $MY$
be the incident  \\ \hspace*{7.9mm} edges on $M$ clockwise and
counterclockwise,  \\ \hspace*{7.9mm} respectively, from the empty
cone; {\bf if} either $MX$ \\ \hspace*{7.9mm} or $MY$ is selected
{\bf then}  select the other edge  \\
\hspace*{7.9mm} (in case it has not been selected); {\bf otherwise} select\\
\hspace*{7.9mm}  the shorter edge between $MX$ and
 $MY$ breaking \\ \hspace*{7.9mm} ties arbitrarily;

\step $G'$ is the spanning subgraph of $G$ consisting of edges selected by
both endpoints.
\end{codeblock}
\end{algorthm}

Since the algorithm selects at most $k$ edges incident on any point
$M$ and since only edges chosen by both endpoints are included in $G'$,
each point has degree at most $k$ in $G'$.

Before we complete the proof of Theorem~\ref{maintheorem}, we show that the
running time of the algorithm is linear. Note first that all edges incident
on point $M$ of degree $\Delta$ can be mapped to the $k$ cones around $M$ in
linear time in $\Delta$. Then, the shortest edge in every cone can be found in
time $O(\Delta)$ (step 2. in the algorithm). Since $k$ is a constant, selecting
the $\ell/2$ edges clockwise (or counterclockwise) from a sequence of $ell < k$
empty cones around $M$ (step 3.1.) can be done in $O(\Delta)$ time. Noting that
the total number of edges in $G$ is linear in the number of vertices completes
the analysis.

To complete the proof of Theorem~\ref{maintheorem}, all we need to do is show:
\begin{lemma}
\label{edgeselected}
If edge $CB$ is not selected by the algorithm, let $CA$ be the shortest edge in
the cone out of $C$ to which $CB$ belongs. Then the edges of the path
described in Lemma~\ref{canpath} are included in $G'$  by the algorithm.
\begin{proof}
For brevity, instead of saying that the algorithm {\bf Modified Yao
Step} selects an edge $MX$ out of a point $M$, we will say that $M$
selects edge $MX$. To get started, it is obvious that $C$ will select edge
$CA$.

By part $(iv)$ of Lemma~\ref{canpath}, the angle $\angle{CAM_1} \geq
\pi/ 2 - \pi/k \geq 6 \pi/k$ for $k \geq 14$. Therefore, at least
two empty cones must fall within the sector $\angle{CAM_1}$
determined by the two consecutive edges $CA$ and $AM_1$, and edges
$AC$ and $AM_1$ will both be selected by $A$. Since edge $CA$ is
also selected by point $C$, edge $AC \in G'$.

By part $(iii)$ of Lemma~\ref{canpath}, for every $i = 1, 2, \cdots,
r-1$, the angle $\angle{M_{i-1}M_iM_{i+1}} \geq (k-2)\pi/k \geq
10\pi/k$ for $k\geq 12$, and hence at least four cones fall within
the angular sector $\angle{M_{i-1}M_iM_{i+1}}$. Since by part {\em
(ii)} of Lemma~\ref{canpath} $M_iC$ is the only possible edge inside
the angular sector $\angle{M_{i-1}M_iM_{i+1}}$, it is easy to see
that regardless of the position of these four cones with respect to
edge $M_iC$, $M_i$ ends up selecting all edges $M_iM_{i-1}$,
$M_iM_{i+1}$ and $M_iC$ in steps 2 and/or 3 of the algorithm. Since
we showed above that $A$ selects edge $AM_1$, this shows that all
edges $M_iM_{i+1}$, for $i=0, \cdots, r-2$, are selected by both
their endpoints, and hence must be in $G'$. Moreover, edge
$M_{r-1}M_r=M_{r-1}B$ is selected by point $M_{r-1}$.

We now argue that edge $BM_{r-1}$ will be selected by $B$. First, observe
that $|BM_{r-1}| \leq |\stackrel{\frown}{AB}| < |CB|$. 
Let $CD$ be the other consecutive edge to $CB$ in $G$ (other than
$CM_{r-1}$). Because $C$ does not select $B$, it follows that
$\angle{M_{r-1}CD} \leq 6 \pi/k$. Otherwise, since $CM_{r-1}$ and
$CB$ are in the same cone, two empty cones would fall within the
sector $\angle{BCD}$ and $C$ would select $B$. Since $CB$ is an edge
in $G$, by the characterization of Delaunay edges~\cite{book},
$\angle{CM_{r-1}B} + \angle{CDB} \leq \pi$. By considering the
quadrilateral $CDBM_{r-1}$, we have $\angle{M_{r-1}CD} +
\angle{DBM_{r-1}} \geq \pi$. This, together with the fact that
$\angle{M_{r-1}CD} \leq 6 \pi/k$, imply that $\angle{DBM_{r-1}} \geq
(k-6)\pi/k \geq 8 \pi/k$, for $k\geq 14$. Therefore,
$\angle{DBM_{r-1}}$ contains at least three cones of size $2\pi/k$
out of $B$. If one of these cones falls within the angular sector
$\angle{CBM_{r-1}}$ then, since $|M_{r-1}B| < |CB|$, $BM_{r-1}$ must
have been selected out of $B$.

Suppose now that $\angle{CBM_{r-1}}$ contains no cone inside and
hence $\angle{CBM_{r-1}} < 4 \pi/k$. If one of these three cones
within sector $\angle{DBM_{r-1}}$ contains edge $CB$, then the
remaining two cones must fall within $\angle{DBC}$ and $BM_{r-1}$
will get selected out of $B$ when considering the sequence of at
least two empty cones contained within $\angle{CBD}$. Suppose now
that all three empty cones fall within $\angle{CBD}$. Then we have
$\angle{CBD} \geq 6\pi/k$.

If $\angle{M_{r-1}CD} \geq 4 \pi/k$, then since $M_{r-1}C$ and $CB$
belong to the same cone, the sector $\angle{BCD}$ must contain an
empty cone. Because $D$ is exterior to $\bigcirc{CBM_{r-1}}$,
$\angle{CBM_{r-1}} < 4 \pi/k$, and $\angle{M_{r-1}CB} \leq 2\pi/k$,
it follows that $\angle{CDB} < \angle{M_{r-1}CB} + \angle{CBM_{r-1}}
< 6 \pi /k < \angle{DBC}$. Therefore, by considering the triangle
$\triangle{CDB}$, we note that $|CB| < |CD|$. But then edge $CB$
would have been selected by $C$ in step 3 since the sector
$\angle{BCD}$ contains an empty cone, a contradiction.

It follows that $\angle{M_{r-1}CD} \leq 4 \pi/k$, and therefore
$\angle{M_{r-1}BD} \geq (k-4) \pi/k \geq 10 \pi/k$ for $k \geq 14$.
This means that at least four cones are contained inside sector
$\angle{DBM_{r-1}}$. It is easy to check now that regardless of the
placement of the edge $BC$ with respect to these cones, edge
$BM_{r-1}$ is always selected out of $B$ by the algorithm. This
completes the proof.
\end{proof}
\end{lemma}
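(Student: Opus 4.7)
The plan is to verify, vertex by vertex along the path $A = M_0, M_1, \ldots, M_r = B$, that each edge $M_iM_{i+1}$ is selected by both of its endpoints. The easy starting point is $C$: by hypothesis $CA$ is the shortest edge in its cone at $C$, so step 2 of the algorithm picks it. The three remaining cases I would handle are the endpoint $A$, each intermediate vertex $M_i$ with $1 \leq i \leq r-1$, and the endpoint $B$. The first two cases follow almost directly from properties $(ii)$--$(iv)$ of Lemma~\ref{canpath}; the vertex $B$ is the one that requires extra Delaunay-specific reasoning and is where I expect the main difficulty.

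At $A$, property $(iv)$ gives $\angle{CAM_1} \geq \pi/2 - \pi/k$, which is $\geq 6\pi/k$ whenever $k \geq 14$, so the sector between the consecutive edges $AC$ and $AM_1$ contains at least two full cones, both empty. Case~3.1 of the modified Yao step ($\ell \geq 2$) then forces $A$ to select both $AC$ and $AM_1$; combined with $C$'s selection of $CA$, we get $AM_0 \in G'$ and $A$ has selected its outgoing path edge. For an intermediate vertex $M_i$, property $(iii)$ yields $\angle{M_{i-1}M_iM_{i+1}} \geq (k-2)\pi/k \geq 10\pi/k$ for $k \geq 14$, so at least four whole cones at $M_i$ lie inside the sector $\angle{M_{i-1}M_iM_{i+1}}$. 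By property $(ii)$ the only edge of $G$ that can lie inside that sector is $M_iC$, so at most one of the four cones is non-empty. A short case analysis on where $M_iC$ falls (inside a cone or splitting two runs of empty cones) shows that in every configuration $M_i$ selects all three of $M_iM_{i-1}$, $M_iM_{i+1}$, and $M_iC$ in steps 2--3 of the algorithm. Chaining these selections inductively along the path then places every edge $M_iM_{i+1}$ for $i = 0, \ldots, r-2$ in $G'$ and also has $M_{r-1}$ select $M_{r-1}B$.

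The remaining and hardest step is arguing that $B$ itself also selects $BM_{r-1}$, since Lemma~\ref{canpath} offers no direct angular information at $B$. I would exploit two facts: the Delaunay characterization of the edge $CB$, and the hypothesis that $C$ did \emph{not} select $CB$. Let $CD$ denote the edge consecutive to $CB$ at $C$ on the side opposite $CM_{r-1}$. Since $CM_{r-1}$ and $CB$ share a cone at $C$, if the sector $\angle{M_{r-1}CD}$ contained two consecutive empty cones, step 3.1 at $C$ would have forced $CB$ to be selected; hence $\angle{M_{r-1}CD} \leq 6\pi/k$. The Delaunay inscribed-angle property applied to $CB$ gives $\angle{CM_{r-1}B} + \angle{CDB} \leq \pi$, and the angle sum of the quadrilateral $CDBM_{r-1}$ gives $\angle{M_{r-1}CD} + \angle{DBM_{r-1}} \geq \pi$; combining these yields $\angle{DBM_{r-1}} \geq (k-6)\pi/k \geq 8\pi/k$ for $k \geq 14$, which guarantees at least three whole cones at $B$ inside the sector $\angle{DBM_{r-1}}$. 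The delicate part is the final case analysis: depending on whether one of these three cones falls inside $\angle{CBM_{r-1}}$, whether edge $CB$ itself occupies one of them, or whether all three sit inside $\angle{DBC}$, I would combine the bound $|BM_{r-1}| \leq |\stackrel{\frown}{AB}| < |CB|$ with a comparison of $|CB|$ and $|CD|$ (via triangle $\triangle{CDB}$, using that $D$ lies outside $\bigcirc{CBM_{r-1}}$) to force $B$ to select $BM_{r-1}$. Pushing every subcase through cleanly, in particular the one where all three empty cones lie in $\angle{CBD}$, is the main obstacle.
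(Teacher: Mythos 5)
Your proposal follows essentially the same route as the paper's own proof: step 2 at $C$, the two-empty-cone argument at $A$ from property $(iv)$, the four-cone argument at intermediate vertices from properties $(ii)$ and $(iii)$, and at $B$ the combination of $\angle{M_{r-1}CD} \leq 6\pi/k$, the Delaunay inscribed-angle bound $\angle{CM_{r-1}B} + \angle{CDB} \leq \pi$, and the quadrilateral angle sum to obtain $\angle{DBM_{r-1}} \geq (k-6)\pi/k$, followed by the same case analysis (including the $|CB| < |CD|$ contradiction via triangle $\triangle{CDB}$ when $\angle{M_{r-1}CD} \geq 4\pi/k$). The subcases you flag as the remaining obstacle are resolved in the paper exactly with the tools you name, so your outline is correct and matches the paper's argument.
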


\begin{corollary}
A Euclidean Minimum Spanning Tree (EMST) on $P$ is a subgraph of $G'$. 
\begin{proof}
It is well known that a Delaunay graph ($G$) contains a EMST. If an edge $CB$
is not in $G'$, then, by Lemma~\ref{edgeselected}, a path from $C$ to $B$ is
included in $G'$. All edges on this path are no longer than $CB$, so there is
a EMST not including $CB$. 
\end{proof}
\end{corollary}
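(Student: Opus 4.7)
The plan is to use a standard edge-swap argument together with the replacement paths produced by Lemma~\ref{edgeselected}. Start with any EMST $T$ of $P$; since it is well known that every EMST is a subgraph of the Delaunay graph $G$, we have $T \subseteq G$. I will show that whenever an edge $CB \in T$ is missing from $G'$, we can swap it for a $G'$-edge of weight at most $|CB|$, producing another EMST with strictly fewer edges outside $G'$. Iterating this swap drives all edges of the current EMST into $G'$ and yields an EMST contained in $G'$.

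For the swap, let $CA$ be the shortest Delaunay edge in the cone at $C$ containing $CB$. By Lemma~\ref{edgeselected}, the entire path $p:$ $A = M_0, M_1, \ldots, M_r = B$ of Lemma~\ref{canpath} lies in $G'$, and the edge $CA$ is in $G'$ as well: $C$ selects $CA$ as the shortest edge of its cone, and $A$ selects $CA$ via step~3.1 of the algorithm, since part~(iv) of Lemma~\ref{canpath} forces at least two empty cones inside the sector $\angle{CAM_1}$. Removing $CB$ from $T$ splits $T$ into two components; the path $C, A, M_1, \ldots, M_r = B$ must cross this cut on some edge $e \in G'$, and $T \setminus \{CB\} \cup \{e\}$ is a spanning tree of weight $w(T) + |e| - |CB|$.

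The main obstacle is therefore to show that every edge of the path $C, A, M_1, \ldots, M_r = B$ has length at most $|CB|$, so in particular $|e| \leq |CB|$. The inequality $|CA| \leq |CB|$ is immediate from the choice of $A$. For an edge $M_iM_{i+1}$ I would reduce to the outward-path building block used in both the outward and inward constructions: each such edge is a chord of the circular segment of some $\bigcirc{CN_jN_{j+1}}$ cut off by chord $N_jN_{j+1}$ on the side away from $C$. Since the inscribed angle at $C$ is at most $2\pi/k < \pi/2$, this is a minor segment whose diameter equals $|N_jN_{j+1}|$, so $|M_iM_{i+1}| \leq |N_jN_{j+1}|$. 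A short law-of-sines computation in $\triangle{CN_jN_{j+1}}$, using $\angle{N_jCN_{j+1}} \leq 2\pi/k$, $|CN_j| \leq |CN_{j+1}| \leq |CB|$, and $k \geq 14$, then gives $|N_jN_{j+1}| \leq 2\sin(\pi/k)\,|CN_{j+1}| < |CB|$. Hence $|e| \leq |CB|$, the swap does not increase weight, and the iterated procedure terminates with an EMST entirely inside $G'$. The trigonometric comparison is the only non-routine ingredient; everything else is bookkeeping on top of Lemma~\ref{edgeselected} and the standard MST exchange argument.
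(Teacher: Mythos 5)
Your overall strategy is the same as the paper's: its one-line proof is exactly the exchange argument you spell out, resting on the claim (which the paper merely asserts) that every edge of the replacement path $C, A, M_1, \ldots, M_r = B$ has length at most $|CB|$. Your iteration and cut-crossing bookkeeping are fine, and so is the reduction of each path edge $M_iM_{i+1}$ to a chord of the circular segment of $\bigcirc{CN_jN_{j+1}}$ cut off by $N_jN_{j+1}$ on the side away from $C$; since the corresponding central angle is $2\angle{N_jCN_{j+1}} \leq 4\pi/k < \pi$, that segment indeed has diameter $|N_jN_{j+1}|$.

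The one step that fails as written is the trigonometric bound $|N_jN_{j+1}| \leq 2\sin(\pi/k)\,|CN_{j+1}|$. That estimate is the isosceles case; when $|CN_j|$ is much smaller than $|CN_{j+1}|$ the angle at $N_j$ tends to $\pi - \gamma$ (where $\gamma = \angle{N_jCN_{j+1}}$), and $|N_jN_{j+1}|$ tends to $|CN_{j+1}|$, not to $0$. Concretely, with $\gamma = 2\pi/14$ and $|CN_j| = |CN_{j+1}|/2$ (perfectly consistent with $|CA| \leq |CN_j| \leq |CN_{j+1}| \leq |CB|$), the law of cosines gives $|N_jN_{j+1}| \approx 0.59\,|CN_{j+1}|$, whereas $2\sin(\pi/14)\,|CN_{j+1}| \approx 0.45\,|CN_{j+1}|$. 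Fortunately all you need is $|N_jN_{j+1}| \leq |CB|$, and that is true: the angle of $\triangle{CN_jN_{j+1}}$ at $N_j$ is at least $(\pi-\gamma)/2$, which exceeds $\gamma$ (as $\gamma \leq 2\pi/k < \pi/3$) and also exceeds the angle at $N_{j+1}$ (since $|CN_{j+1}| \geq |CN_j|$); hence its opposite side $CN_{j+1}$ is a longest side of the triangle and $|N_jN_{j+1}| \leq |CN_{j+1}| \leq |CB|$. With that repair your argument is complete and coincides with the paper's intended (but unjustified) claim that all path edges are no longer than $CB$.
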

Since a Delaunay graph of a Euclidean graph of $n$ points can be
computed in time $O(n\lg{n})$~\cite{book} and has stretch factor
$C_{del}\approx 2.42$, we have the following theorem.
\begin{theorem}
\label{spannereuclidean} There exists an algorithm that, given a
set $P$ of $n$ points in the plane, computes a plane geometric spanner
of the Euclidean graph on $P$ that contains a EMST, has maximum degree $k$,
and has stretch factor
$(1+2\pi(k\cos{\frac{\pi}{k}})^{-1}) \cdot C_{del}$, where $k \geq 14$
is an integer. Moreover, the algorithm runs in time $O(n\lg{n})$.
\end{theorem}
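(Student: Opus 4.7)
The plan is to assemble Theorem~\ref{spannereuclidean} as a direct composition of the results already proved in the excerpt, so the proposal is essentially a bookkeeping exercise identifying each ingredient.

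First, I would compute the Delaunay graph $G$ of the point-set $P$; by the classical result cited (\cite{book}), this takes $O(n\lg n)$ time, and $G$ is a planar spanning subgraph of the Euclidean graph $E$ with stretch factor at most $C_{del}$ (Keil and Gutwin~\cite{keil}). Next, I would feed $G$ together with the integer parameter $k\geq 14$ into the \textbf{Modified Yao Step} of Section~\ref{modifiedyaoalgo}, producing the subgraph $G'$. By Theorem~\ref{maintheorem} (whose proof is completed by Lemma~\ref{edgeselected}), $G'$ has maximum degree at most $k$ and stretch factor $\rho := 1+2\pi(k\cos(\pi/k))^{-1}$ with respect to $G$. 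Planarity of $G'$ is inherited from $G$, and the EMST containment comes for free from the corollary following Lemma~\ref{edgeselected}.

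The stretch factor with respect to $E$ then follows from a standard multiplicative composition of stretch factors: for any edge $XY\in E$, Observation~\ref{optimal} gives a path in $G$ from $X$ to $Y$ of cost at most $C_{del}\cdot |XY|$, and applying Observation~\ref{optimal} again edge-by-edge along that path (using $G'$ as a spanner of $G$ with stretch $\rho$) yields a path in $G'$ of cost at most $\rho\cdot C_{del}\cdot |XY|$. Invoking Observation~\ref{optimal} once more shows that $G'$ is a spanner of $E$ with stretch factor $\rho\cdot C_{del}=(1+2\pi(k\cos(\pi/k))^{-1})\cdot C_{del}$, which is exactly the bound claimed.

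For the running time, the Delaunay step dominates at $O(n\lg n)$, while the Modified Yao Step was shown in Section~\ref{modifiedyaoalgo} to run in time linear in the size of $G$; since a planar graph on $n$ vertices has $O(n)$ edges, the post-processing is $O(n)$, giving an overall $O(n\lg n)$ algorithm. There is no genuine obstacle here: all the non-trivial geometric work is absorbed into Lemma~\ref{canpath} and Lemma~\ref{edgeselected}, and the only thing to be careful about in writing up the proof is stating the multiplicative stretch-factor composition cleanly via Observation~\ref{optimal} so that the two ``with respect to $G$'' and ``with respect to $E$'' bounds are not conflated.
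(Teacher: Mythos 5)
Your proposal is correct and takes essentially the same route as the paper, which derives Theorem~\ref{spannereuclidean} in a single sentence by noting that the Delaunay graph is computable in $O(n\lg n)$ time, has stretch factor $C_{del}$, and then invoking Theorem~\ref{maintheorem}. Your write-up is in fact slightly more careful than the paper's, since you make the multiplicative composition of the two stretch factors explicit via Observation~\ref{optimal} rather than leaving it implicit.
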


\section{Geometric Spanners of Unit Disk Graphs}
  \label{structresults}  In this section we generalize our planar geometric
spanner algorithm to unit disk graphs. Unit disk graphs model wireless ad-hoc
and sensor networks and, for packet routing and other applications, a
bounded-degree planar geometric spanner of the wireless network is often
desired. Due to the limited computational power of the network devices and
the requirement that the network be robust with respect to device joining
and leaving the network, the construction/algorithm should ideally be
{\em strictly-localized}: the computation performed at a point depends solely
on the information available at the point and its $d$-hop neighbors, for some
constant $d$ (in our case $d=2$). In particular, no global propagation of
information should take place in the network.

The results in the previous section do not carry over to unit disk graphs
because not all Delaunay graph edges on a point-set $P$ are
unit disk edges. However, if $U$ is the unit disk graph on points in $P$ and
$UDel(U)$ is the subgraph of the Delaunay graph on $P$ obtained by deleting
edges of length greater than one unit, then $UDel(U)$ is a connected,
planar, spanning subgraph of $U$ with stretch factor bounded by $C_{del}$
(see~\cite{iitunbounded1, BoseMNSZ04}). Therefore, if we apply the results
from the previous section to
$UDel(U)$ and observe that all edges on the path defined in
Lemma~\ref{canpath} must be unit disk edges (given that edges $CA$
and $CB$ are), it is easy to see that Theorem~\ref{maintheorem} and
Theorem~\ref{spannereuclidean} carry over to unit disk graphs. The
only problem, however, is that the construction of $UDel(U)$ cannot
be done in a strictly-localized manner.

To solve this problem, Wang et al.~\cite{iitunbounded1,iitunbounded}
introduced a subgraph of $U$ denoted $LDel^{(2)}(U)$. It was shown
in~\cite{iitunbounded1,iitunbounded} that $LDel^{(2)}(U)$ is a
planar supergraph of $UDel(U)$, and hence also has stretch factor bounded
by $C_{del}$. Moreover, the results in~\cite{gruia,iitbounded} show
how $LDel^{(2)}(U)$ can be computed with a strictly-localized
distributed algorithm exchanging no more than $O(n)$ messages in
total ($n$ is the number of points in $U$), and having a local
processing time of $O(\Delta\lg{\Delta}) = O(n\lg{n})$ at a point of
degree $\Delta$. In a style similar to Definition~\ref{DelaunayDef},
$LDel^{(2)}(U)$ can be defined as follows:
\begin{definition}
\label{ldel2}
An edge $XY$ of $U$ is in $LDel^{(2)}(U)$ if and only if there exists a
circle through points $X$ and $Y$ whose interior contains no point of $U$
that is a 2-hop neighbor of $X$ or $Y$.
\end{definition}
We will use $G=LDel^{(2)}(U)$ as the underlying subgraph
of $U$ to replace the Delaunay graph $G$ used in the previous
section. We note that $G$ is planar, is a
supergraph of $UDel(U)$, and hence has stretch factor $C_{del}$. To
translate our results to unit disk graphs, we need to show that the
inward and outward paths are still well defined in $G$.
In particular, we need to show that Lemma~\ref{canpath} holds for
$G=LDel^{(2)}(U)$. We outline the general approach and omit the
details for lack of space.

The following is equivalent to Proposition~\ref{Euclidproperty}:
\begin{lemma}
\label{UnitDiskproperty} If $CA$ and $CB$ are edges of $G$ then the region
of $(O) = \bigcirc{ABC}$ subtended by chord $CA$ and away from $B$ and
the region of $(O)$ subtended by chord $CB$ and away from $A$ contain no
points that are two hop neighbors of $A$, $B$ and $C$.
\begin{proof}
By symmetry it is enough to prove the lemma for  the region of $(O)$
subtended by chord $CA$ and away from $B$. By Definition~\ref{ldel2},
there is a circle $(O_{CA})$ passing through $C$ and $A$ whose interior
is empty of any point within two hops of $C$ or $A$. The region of $(O)$
subtended by chord $CA$ and away from $B$ is inside this circle, so we only
need to argue that it doesn't contain two hop neighbors of $B$ either. If it
did, say point $X$, then any neighbor of $X$ and $B$ would have to be a
neighbor of $C$ or $A$ as well, a contradiction.
\end{proof}
\end{lemma}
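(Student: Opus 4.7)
My plan is to adapt the proof of Proposition~\ref{Euclidproperty} by using Definition~\ref{ldel2} to supply an empty circle in place of the Delaunay emptiness property. By symmetry it suffices to treat the region $R$ of $(O)$ subtended by chord $CA$ and away from $B$; the argument for the region subtended by $CB$ and away from $A$ is identical after swapping the roles of $A$ and $B$.

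First I would apply Definition~\ref{ldel2} to the edge $CA\in G$: this produces a circle $(O_{CA})$ through $C$ and $A$ whose open disk contains no $2$-hop neighbor of $C$ or $A$ in $U$. Next I would show $R\subseteq (O_{CA})$. The point $B$ is a $1$-hop, hence $2$-hop, neighbor of $C$, so $B$ does not lie in the interior of $(O_{CA})$. Two distinct circles through $C$ and $A$ meet exactly at $\{C,A\}$, and the chord $CA$ splits their interiors into two circular segments in such a way that whichever circle is ``larger'' on one side of $CA$ is ``smaller'' on the other. Since $(O)=\bigcirc{ABC}$ reaches all the way to $B$ on the $B$-side of $CA$ while $(O_{CA})$ does not, $(O_{CA})$ must cover at least as much of the plane as $(O)$ on the far side of $CA$, and so $R\subseteq (O_{CA})$. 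This immediately rules out any $2$-hop neighbor of $C$ or $A$ from $R$.

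It remains to rule out $2$-hop neighbors of $B$ from $R$, and this is the step I expect to be the main obstacle. Suppose for contradiction that some $X\in R$ is a $2$-hop neighbor of $B$, witnessed by $Y$ with $XY, YB\in U$. Then $Y$ is a neighbor of $B$ and $B$ is a neighbor of $C$, so $Y$ is a $2$-hop neighbor of $C$ and therefore cannot lie in the interior of $(O_{CA})$. The plan is then to argue that the constraints $|XY|\le 1$, $|YB|\le 1$, with $X$ inside $(O_{CA})$ on the far side of $CA$ from $B$ and $Y$ outside $(O_{CA})$, force $Y$ itself to be a neighbor of $C$ or of $A$ in $U$. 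If so, $X$ would be a $2$-hop neighbor of $C$ or $A$ via $Y$, contradicting $X\in R\subset\mathrm{int}(O_{CA})$. Pinning down this geometric squeeze is the delicate part; I would tackle it by comparing the circular arcs of $(O_{CA})$ near $C$ and near $A$ with the unit disks around $B$ that contain $Y$, and concluding that $Y$ has no room to avoid being adjacent to one of $C$ or $A$.
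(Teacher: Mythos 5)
Your proposal follows the paper's proof essentially step for step: the same symmetry reduction, the same invocation of Definition~\ref{ldel2} to obtain the witness circle $(O_{CA})$, the same nesting argument for the pencil of circles through $C$ and $A$ (which you spell out in more detail than the paper does) to get $R\subseteq (O_{CA})$, and the same reduction to excluding $2$-hop neighbors of $B$. The one step you explicitly leave open --- that the intermediate witness $Y$ with $XY, YB\in U$ must itself be a neighbor of $C$ or of $A$ --- is precisely the entire content of the paper's last sentence (``any neighbor of $X$ and $B$ would have to be a neighbor of $C$ or $A$ as well''), which the paper asserts with no justification whatsoever. So you have correctly identified the crux, but as a standalone proof your write-up has a genuine gap: the contradiction is never actually derived.

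Regarding your plan for closing it: a purely metric squeeze from $|XY|\le 1$ and $|YB|\le 1$ alone cannot work. If the circumcircle $\bigcirc{ABC}$ is large (e.g.\ $\angle ABC$ close to $\pi$), the region $R$ is a large major segment, and one can place $X\in R$ far from $C$, $A$, $B$ and put $Y$ near the midpoint of $XB$, at distance greater than $1$ from both $C$ and $A$; nothing in the unit-distance constraints forbids this. What rules such configurations out is exactly the observation you already made but did not exploit to the end: $Y$ is a $2$-hop neighbor of $C$ via $B$, so $Y\notin\mathrm{int}(O_{CA})$, while \emph{every} circle through $C$ and $A$ whose interior excludes $B$ must contain the open segment $R$ in its interior (the same nesting argument you used for $(O)$ applies to the whole pencil). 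Hence $Y$ cannot lie in $R$, and the argument must then pin $Y$ down using the emptiness of $(O_{CA})$ (and, if needed, of the analogous circle $(O_{CB})$), not just the two unit-length edges. In short: same approach as the paper, correct up to the final step, but the final step is missing in your version --- and the direction you propose for it (arc-versus-unit-disk comparison without using the empty circles) is not by itself sufficient.
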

With this lemma in hand, the recursive construction of the outward path given
in Subsection~\ref{outpath} can be applied to the graph $G = LDel^{(2)}(U)$.
The following proposition for $G = LDel^{(2)}(U)$ corresponds to
Proposition~\ref{nocross} for Delaunay graphs and is proven in an equivalent
manner:
\begin{proposition}
\label{nocross2} In every recursive step of the outward path
construction, if $M_p$ is an intermediate point with
respect to a pair of points $(M_i,M_j)$, then:
\begin{itemize}

\item[(a)] there is a circle passing through $C$ and $M_p$ that contains
no point of $G$ that is a two-hop neighbor of $C$ or $M_p$, and

\item[(b)] circles $\bigcirc{CM_iM_p}$ and $\bigcirc{CM_jM_p}$ contain no
points of $G$ that are two-hop neighbors of $C$, $M_i$ and $M_p$ and $C$,
$M_j$, and $M_p$, respectively, except, possibly, in the region subtended by
chords $M_iM_p$ and $M_pM_j$, respectively, away from $C$.
\end{itemize}
\end{proposition}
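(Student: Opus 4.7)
The plan is to mirror the inductive proof of Proposition~\ref{nocross} essentially line for line, systematically replacing every occurrence of ``contains no point of $G$'' with the corresponding ``contains no two-hop neighbor of $\ldots$'' statement, and invoking Lemma~\ref{UnitDiskproperty} and Definition~\ref{ldel2} wherever the Delaunay proof invoked Proposition~\ref{Euclidproperty} and the standard Delaunay edge characterization.

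First I would set up the induction on the depth of the outward-path recursion. The inductive hypothesis would assert: there exist circles $(O_{M_i})$ through $C$ and $M_i$, and $(O_{M_j})$ through $C$ and $M_j$, each empty of two-hop neighbors of both of its defining endpoints; and the region $R'$ of $\bigcirc{CM_iM_j}$ subtended by $M_iM_j$ closer to $C$ contains no two-hop neighbor of $C$, $M_i$, or $M_j$. The base case reduces to Definition~\ref{ldel2} applied to the edges $CM_i, CM_j \in G$ together with Lemma~\ref{UnitDiskproperty} applied to $(O) = \bigcirc{CAB}$.

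For the inductive step I would keep the exact geometric construction from Proposition~\ref{nocross}: pick $M_p$ so that the region $R$ of $\bigcirc{M_iM_pM_j}$ subtended by $M_iM_j$ closer to $M_p$ is empty of two-hop neighbors (of $M_i$ and $M_j$), then define the circle through $C$ and $M_p$ tangent internally to $\bigcirc{M_iM_pM_j}$ at $M_p$. Geometrically this circle lies in $(O_{M_i}) \cup (O_{M_j}) \cup R \cup R'$, so by the inductive hypothesis every point inside it fails to be a two-hop neighbor of $C$, $M_i$, or $M_j$. For part (a), I would then rule out two-hop neighbors of $M_p$ by the same ``chase the intermediate neighbor'' trick used in the proof of Lemma~\ref{UnitDiskproperty}: if $X$ inside this circle were a two-hop neighbor of $M_p$ via some common neighbor $Y$ of $X$ and $M_p$, one shows $Y$ would be forced to be a neighbor of $C$, $M_i$, or $M_j$, making $X$ a two-hop neighbor of one of those previously-handled vertices, contradicting the inductive hypothesis. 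Part (b) is proved by the same bookkeeping applied to $\bigcirc{CM_iM_p}$ and $\bigcirc{CM_jM_p}$: the portion of each circle on the $C$-side of its far chord lies in $(O_{M_i}) \cup R \cup R'$ (respectively $(O_{M_j}) \cup R \cup R'$), which is devoid of two-hop neighbors of the relevant triple, with the symmetric argument for the other circle.

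The main obstacle will be the vertex-dependence of the two-hop neighbor relation: emptiness with respect to $\{C, M_i\}$ does not automatically transfer to emptiness with respect to $M_p$. Every time the original Delaunay proof silently uses ``no point of $G$ is inside this region,'' the $LDel^{(2)}(U)$ version has to separately verify emptiness with respect to each of the newly relevant endpoints, and this is the step where the intermediate-neighbor argument from Lemma~\ref{UnitDiskproperty} must be reinvoked. Once this bookkeeping is carried out consistently, the rest of the argument is purely geometric and is inherited verbatim from Proposition~\ref{nocross}.
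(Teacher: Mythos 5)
Your proposal follows exactly the route the paper intends: the paper gives no explicit proof of Proposition~\ref{nocross2}, stating only that it ``is proven in an equivalent manner'' to Proposition~\ref{nocross} using Lemma~\ref{UnitDiskproperty} and Definition~\ref{ldel2}, and your induction with the two-hop bookkeeping and the intermediate-neighbor chase is precisely that equivalent argument, filled in. Correct, and essentially the same approach.
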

With this proposition, we can show that Lemma~\ref{canpath} holds
true for $G = LDel^{(2)}(U)$ for outward paths. It holds for inward paths
as well, using the same argument as in Section~\ref{inwardpatheuclidean}.  
Finally, it is obvious how the {\bf Modified Yao Step} algorithm in
Section~\ref{modifiedyaoalgo} can be easily described as a
strictly-localized algorithm. We can show, therefore, the following theorem:
\begin{theorem}
\label{spannerunit} There exists a distributed strictly-localized
algorithm that, given a set $P$ of $n$ points in the plane, computes a plane
geometric spanner of the unit disk graph on $P$ that contains a EMST, has
maximum degree $k$, and has stretch factor
$(1+2\pi(k\cos{{\frac{\pi}{k}}})^{-1})\cdot C_{del}$,
for any integer $k \geq 14$. Moreover, the algorithm exchanges no
more than $O(n)$ messages in total, and has a local processing time
of $\Delta\lg{\Delta}$ at a point of degree $\Delta$.
\end{theorem}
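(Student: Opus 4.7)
The plan is to reduce Theorem~\ref{spannerunit} to the machinery already developed for Theorem~\ref{maintheorem} by choosing the right underlying subgraph of $U$. I will use $G = LDel^{(2)}(U)$ as the replacement for the Delaunay graph: it is planar, it is a supergraph of $UDel(U)$ and therefore has stretch factor at most $C_{del}$ with respect to $U$, and by the results of~\cite{gruia,iitbounded} it can be constructed by a strictly-localized distributed algorithm exchanging $O(n)$ messages with $O(\Delta\lg\Delta)$ local work at a node of degree $\Delta$. Once $G$ is in hand, I will run the \textbf{Modified Yao Step} of Section~\ref{modifiedyaoalgo} on $G$; because the step at a point $M$ only consults the edges of $G$ incident on $M$ (which are determined by the 2-hop neighborhood of $M$ in $U$), and because an edge is retained only if both endpoints select it, the whole procedure is strictly-localized with no additional global communication.

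To justify correctness, I first need Lemma~\ref{canpath} for $G = LDel^{(2)}(U)$. The outward-path construction of Subsection~\ref{outpath} goes through verbatim once Proposition~\ref{nocross} is replaced by Proposition~\ref{nocross2}; the latter is established exactly as in the Euclidean setting but with ``two-hop neighbor'' emptiness conditions provided by Lemma~\ref{UnitDiskproperty} in place of Proposition~\ref{Euclidproperty}. The inward-path argument of Subsection~\ref{inwardpatheuclidean} then carries over without change, because its only inputs are the outward-path lemma applied to consecutive convex-hull edges and the purely combinatorial Proposition~\ref{inwardproperties}, whose proof uses only the facts that $CA$ and $CB$ are edges of $G$ and that $CA$ is shortest in its cone. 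A key point to verify is that every edge produced along the path is actually a unit-disk edge: this follows inductively because each such edge lies inside the circle $(O) = \bigcirc{ABC}$ whose diameter is at most $\max(|CA|,|CB|)\cdot(\cos(\pi/k))^{-1}\le 1/\cos(\pi/k)$ times short edges, and more directly from the observation (already noted in the excerpt) that every edge on the path is an edge of $UDel(U)\subseteq G$ whenever $|CA|,|CB|\le 1$.

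With Lemma~\ref{canpath} in place, Lemma~\ref{edgeselected} applies with no modification to conclude that for any edge $CB$ of $G$ not kept by the Modified Yao Step, the entire path of Lemma~\ref{canpath} survives in $G'$; combined with Observation~\ref{optimal} and the stretch factor $C_{del}$ of $G$ with respect to $U$, this gives the claimed stretch factor $(1+2\pi(k\cos(\pi/k))^{-1})\cdot C_{del}$ of $G'$ with respect to $U$. The EMST corollary transfers as well: an EMST of $P$ is contained in $UDel(U)\subseteq G$, and whenever an edge $CB$ is discarded, the surviving replacement path consists of edges no longer than $CB$, so an EMST remains in $G'$.

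The main obstacle I anticipate is not the spanner analysis but the localization argument: I must be sure that the Modified Yao Step at a point $M$ requires only information about the edges of $G$ incident on $M$ and their other endpoints (to check whether each such edge will also be selected by its other endpoint). Because both endpoints of every candidate edge can independently run the step on the basis of their own incident edges in $G$, and because their shared decision ``$MX\in G'$'' depends only on whether $M$ selects $MX$ and $X$ selects $XM$, one extra round of communication along each edge of $G$ suffices; this adds only $O(|E(G)|)=O(n)$ messages and $O(\Delta\lg\Delta)$ local time (for the angular sort of incident edges at each node), matching the bounds promised by the theorem.
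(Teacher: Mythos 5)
Your proposal follows essentially the same route as the paper: replace the Delaunay graph by $G=LDel^{(2)}(U)$ (planar, supergraph of $UDel(U)$, stretch factor $C_{del}$, computable by a strictly-localized algorithm with $O(n)$ messages and $O(\Delta\lg\Delta)$ local work), transfer Lemma~\ref{canpath} via the two-hop emptiness property (Lemma~\ref{UnitDiskproperty} and Proposition~\ref{nocross2}), and then run the \textbf{Modified Yao Step} so that Lemma~\ref{edgeselected} and the EMST argument carry over. This matches the paper's proof sketch, so no further comparison is needed.
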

Due to the strictly-localized nature of the algorithm, the algorithm
is very robust to topological changes (such as wireless devices moving or
joining or leaving the network), an essential property for
the application of the algorithm in a wireless ad-hoc environment.
\bibliographystyle{plain}

\end{document}